\DeclareMathOperator*{\argmax}{arg\,max}
\newcommand{\Po}{\mathbb{P}}
\newcommand{\F}{\mathcal{F}}
\newcommand{\R}{\mathbb{R}}
\spnewtheorem{rem}{Remark}{\bf}{\it}
\spnewtheorem{prob}{Problem}{\bf}{\it}
\spnewtheorem{thm}{Theorem}{\bf}{\it}
\spnewtheorem{prop}{Proposition}{\bf}{\it}
\spnewtheorem{asu}{Assumption}{\bf}{\it}
\journalname{Mathematics and Financial Economics}
\begin{document}

\title{Optimal investment with transaction costs\\ under cumulative prospect theory in discrete time%\thanks{Grants or other notes
%about the article that should go on the front page should be
%placed here. General acknowledgments should be placed at the end of the article.}
}

%\subtitle{Do you have a subtitle?\\ If so, write it here}

\titlerunning{Optimal Investment with Transaction Costs under CPT}        % if too long for running head

\author{Bin Zou         \and
        Rudi Zagst %etc.
}

%\authorrunning{Short form of author list} % if too long for running head

\institute{Bin Zou \at
              %Chair of Mathematical Finance, Technical University of Munich\\
              %Tel.: +49 89-289-17418\\
              %Fax: +49 89-289-17407\\
              %\email{bin.zou@tum.de}           %  \\
              Department of Applied Mathematics, University of Washington\\
              Tel.: +1 206-543-4065\\
              Fax: +1 206-685-1440\\
              \email{binzou@uw.edu}
%             \emph{Present address:} of F. Author  %  if needed
           \and
           Rudi Zagst (Corresponding author) \at
           Chair of Mathematical Finance, Technical University of Munich\\
           Parkring 11, Garching-Hochbr\"uck 85748, Germany\\
              Tel.: +49 89-289-17401\\
              Fax: +49 89-289-17407\\
              \email{zagst@tum.de}
}

\date{Received: April 27, 2016\\
Revised: November 13, 2016}
% The correct dates will be entered by the editor

\maketitle

\begin{abstract}
We study optimal investment problems under the framework of cumulative prospective theory (CPT).
A CPT investor makes investment decisions in a single-period financial market with transaction costs.
The objective is to seek the optimal investment strategy that maximizes the prospect value of the investor's final wealth.
We obtain the optimal investment strategy explicitly in two examples.
An economic analysis is conducted to investigate the impact of the transaction costs and risk aversion on the optimal investment strategy.
\keywords{cumulative prospect theory \and economic analysis \and optimal investment \and S-shaped utility \and transaction costs}

\vspace{1em}
\noindent
\textbf{JEL Classification} G11
% \PACS{PACS code1 \and PACS code2 \and more}
% \subclass{MSC code1 \and MSC code2 \and more}
\end{abstract}

\section{Introduction}
\label{intro}
In economics and finance, an essential problem is how to model people's preference over uncertain outcomes.
To address this problem, \cite{B54} (originally published in 1738) proposes \emph{expected utility theory} (EUT): any uncertain outcome $X$ is represented by a numerical value $E[U(X)]$, which is the expected value of the utility $U(X)$ taken under an objective probability measure $\Po$.
An outcome $X_1$ is preferred to another outcome $X_2$ if and only if
$E[U(X_1)] > E[U(X_2)]$. Hence, according to EUT, a rational individual seeks to maximize the expected utility $E[U(X)]$ over all available outcomes.
Bernoulli's original EUT is formally established by von Neumann and Morgenstern (thus the theory is also called \emph{von Neumann-Morgenstern utility theorem})
in \cite{VNM44}, which show that any individual whose behavior satisfies certain axioms has a utility function $U$ and always prefers outcomes that maximize the expected utility.
Since then, expected utility maximization has been one of the most widely used criteria for optimization problems concerning uncertainty,
see, e.g., the pioneering work of \cite{M69} and \cite{S69} on optimal investment problems.

However, empirical experiments and research show that human behavior may violate the basic tenets of EUT, e.g., Allais paradox challenges the fundamental \emph{independence axiom} of EUT.
In addition, the utility function $u(\cdot)$ under EUT is concave, i.e. individuals are uniformly risk averse, which contradicts the risk seeking behavior in case of losses observed from behavioral experiments.
Please refer to \cite{KT79} for many designed choice problems and results which cannot be explained by EUT.
Alternative theories have been proposed to address the drawbacks of EUT, such as \emph{Prospect Theory} by \cite{KT79}, \emph{Rank-Dependent Utility} by \cite{Q82}, and \emph{Cumulative Prospect Theory}\footnote{Daniel Kahneman was awarded the 2002 Nobel Memorial Prize in Economic Sciences for his pioneering work on the psychology of decision-making and behavioral economics (notably, prospect theory and cumulative prospect theory).} (CPT) by \cite{TK92}.
CPT can explain diminishing sensitivity, loss aversion, and different risk attitudes.
Furthermore, unlike prospect theory, CPT does not violate the first-order
stochastic dominance.
The detailed characterizations of CPT are presented in Subsection \ref{subsection_CPT}.
In the CPT setting, the objective functional is non-concave and non-convex;
in addition, the probability distortion (an intrinsic feature of CPT) destroys the tower rule of conditional expectation.
Hence, two powerful tools, martingale method (convex duality) and dynamic programming, commonly used to solve optimization problems under EUT are not longer applicable under CPT.

Optimal investment problems under the CPT framework (in both continuous time models and discrete time models) have attracted attentions recently in the academic field, although the related literature is still scarce comparing to the vast extent of literature on optimal investment problems under EUT.
The first attempt to solve optimal investment problems under prospect theory (excluding probability distortion in CPT) in continuous time can be traced back to \cite{BKP04}, in which the optimal portfolio weight is found explicitly in a complete market without transaction costs.
Jin and Zhou in \cite{JZ08} provide the first analytical treatment for the same problem as in \cite{BKP04}, but under a full CPT setting (namely, with the probability distortion feature), and also address
the well postedness of the problem.
By splitting the CPT optimization problem into two Choquet optimization problems and applying a quantile transformation, the authors of \cite{JZ08} obtain an explicit characterization for the terminal value of the optimal portfolio and thus the existence (and also uniqueness) of the optimal investment strategy in a complete continuous model.
Using the tool of quantile transformation to solve portfolio selection problems is fully studied in \cite{CD11} and \cite{HZ11b} under a wide collection of optimization criteria, including CPT.
The sufficient and necessary conditions for a well posted optimal investment problem under CPT and the existence of an optimal investment strategy are further studied in \cite{RR13} for a piece-wise power probability distortion and a piece-wise power utility function, and in \cite{RR14} for bounded utility on gains (for example, an exponential utility).
An asymptotic analysis on optimal investment strategies is performed in \cite{JZ13}.
Nonetheless, in all above mentioned papers, neither an analytical expression nor a numerical method is provided to solve for the optimal investment strategy in a continuous market.
Furthermore, an essential assumption for all those papers in continuous time models is the market completeness,
or equivalently the uniqueness of the pricing kernel.

Studying CPT portfolio optimization problems in discrete time is as hard as analyzing those in continuous time, and even requires different techniques and tools, since discrete time models are intrinsically incomplete.
The initial work on such problems in a single-period discrete model (again without transaction costs) is done in two parallel papers, \cite{BG10} and \cite{HZ11}.
The authors of \cite{BG10} obtain an explicit optimal solution in a frictionless financial market under the following assumptions: a piece-wise power utility, risk-free asset as the reference point, and the constraint of no short-selling.
They also study the properties of a new risk measure (called \emph{CPT-ratio}) and conduct numerical simulations to investigate the impact of several factors, including mean, volatility, skewness, and risk aversion, on the optimal investment.
In \cite{HZ11}, the authors consider the same problem as in \cite{BG10}, but provide detailed analysis on the well posedness of the problem by introducing a new measure of loss aversion (called \emph{large-loss aversion degree}). They do not impose any constraint on the investment strategies and are able to find optimal solutions explicitly in two cases:
(1) a piece-wise power utility and risk-free asset as the reference point;
(2) a piece-wise linear utility and a general reference point.
In \cite{HZ14}, upon the model of \cite{HZ11}, the authors use the NYSE equity and US treasury bond returns for the period 1926-1990 to perform empirical studies, including the impact of loss aversion, evaluation period, and the reference point on the optimal investment strategy.
In both \cite{BG10} and \cite{HZ11}, there is only one risky asset in the financial market.
The authors of \cite{PS12} then extend the previous work by considering a frictionless market consisting of one risk-free asset and multiple risky assets.
Their main contribution is to provide a two-fund separation theorem between the risk-free asset and the market portfolio when the excess return has an elliptically symmetric distribution.
In \cite{CR15}, the authors tackle the CPT portfolio optimization problem in a multi-period discrete market model for the first time in literature.
They not only address the well posedness issue of the problem but also establish the existence of optimal strategies under some assumptions.
A similar problem as in \cite{CR15} without probability distortion is also considered in \cite{CR15b}, where dynamic programming is applied to solve the maximization problems of non-concave utility functions over the entire real line. In \cite{CRR15}, the authors investigate non-concave utility maximization problems on the positive real line, i.e., they restrict to portfolio strategies that lead to non-negative terminal wealth.
Along the same line as \cite{CR15b} and \cite{CRR15}, \cite{R15} studies the same problem but for non-concave utility that is bounded from above.

Without transaction costs, the optimal portfolio found in Merton's framework may lead to unrealistic strategies, e.g., buying stocks at infinite amount. In real life, transaction costs (bid-ask spread) are always present, albeit small for highly liquid assets.
In their seminal paper, \cite{MC76} claim through heuristic arguments that the optimal portfolio contains a no-trade region. \cite{DN90} provide a rigorous treatment on optimal policies, solutions to the free boundary problem, and the value process (associated optimal expected utility).
\cite{SS94} further generalize the results with viscosity techniques.
Please refer to \cite{KS10} and the references therein for a comprehensive introduction and development on the mathematical theory of financial markets with transaction costs.
The majority of existing literature on optimal investment problems with transaction \mbox{costs}, including those mentioned above, pursue an analysis for an investor who \mbox{behaves} according to EUT.

In this paper, we consider optimal investment problems under the CPT framework in a single-period discrete-time model with transaction costs, which, to our best knowledge, has not been studied before.
Our work differs from the existing literature in several directions.
We summarize the main contributions of our paper through detailed comparisons to the literature in what follows.
\begin{itemize}
  \item First, the financial market we consider is not only incomplete but also non frictionless.
As reviewed in the above context, current literature in continuous time relies heavily on the market completeness while in discrete time all existing papers, as far as we know, work under a frictionless market (i.e., a market without transaction costs).
For the first time in the literature, our paper provides studies on CPT portfolio optimization problems in a financial market with transaction costs.

  \item Secondly, our main objective is to obtain optimal investment strategies in explicit form.
We are able to achieve such objective in two market cases.
Several existing papers, e.g., \cite{BG10}, \cite{HZ11}, and \cite{PS12}, pursue a similar goal as ours, but under different settings of the utility function, the probability distortion, the reference point, and/or the distribution of the risky return.
For instance, in Section \ref{section_case1}, we consider a random reference point while \cite{BG10} and \cite[Section 5.1]{HZ11} assume a constant reference point;
in Section \ref{sec_case2}, we consider a piece-wise exponential utility function (which is bounded from above) but \cite{BG10} and \cite[Section 5.1]{HZ11} work with a piece-wise power utility (unbounded from above);
in Section \ref{sec_case2}, the risky return follows a binomial (discrete) distribution, whereas an elliptically symmetric (continuous) distribution is considered in \cite{PS12}.

\item Thirdly, we perform an economic analysis for sensitivity behavior of the optimal investment strategy.
Such studies are not available in continuous time models, since finding an explicit optimal portfolio is still an open question, even in the simplest Black-Scholes model.
In discrete time models, sensitivity results exist in a very limited extent of literature, for instance, \cite[Section 5]{BG10} and \cite[Section 5]{PS12}.
Due to the different market settings as \cite{BG10} and \cite{PS12}, the sensitivity analysis performed in our paper complements to the current literature.
\end{itemize}

The rest of the paper is organized as follows.
Section \ref{sec_model} introduces the market model with transaction costs, and the three key components of the CPT framework.
The main optimization problem is also formulated in Section \ref{sec_model}.
%We review the utility function and the weighting function used in the literature regarding CPT in Section \ref{sec_review}.
We obtain the optimal investment strategy in explicit form for two cases in Section \ref{section_case1} and Section \ref{sec_case2}, respectively.
We provide an economic analysis in Section \ref{sec_economic_analysis} to study how the optimal investment strategy is affected by transaction costs and risk aversion.
The conclusions of our work are summarized in Section \ref{sec_conclusion}.

\section{The Setup}
\label{sec_model}
\subsection{The Financial Market with Transaction Costs}
\label{subsection_market}
We consider a single-period discrete-time financial market model equipped with a complete probability space $(\Omega,\F,\Po)$.
In the model, time $0$ and time $T$ ($T>0$) represent present and future, respectively.
The financial market consists of one risk-free asset and one risky asset (e.g., stock index).
Trading the risk-free asset is frictionless.
However, trading the risky asset will incur proportional transaction costs, and we denote such proportion by $\lambda$, where $\lambda \in [0,1)$.
If $\lambda = 0$, then the market is reduced to a frictionless one, as in \cite{BG10}, \cite{HZ11}, and among others.
In this paper, we are mainly concerned with the case when $\lambda > 0$, i.e., a financial market with transaction costs.

The risk-free return for the time period $[0,T]$ is $r$, where $r \ge 0$ is a constant. That means if an investor deposits \EUR{1} in the risk-free asset at time $0$, he or she will receive \euro $(1+r) $ at time $T$.

The (nominal) return on the risky asset is given by a random variable $R$.
We assume the ask price of the risky asset $S(\cdot)$ is modeled by $S(T) = (1+R) \cdot S(0)$, where $S(0)$ is a positive constant.
The bid price of the risky asset at time $t$ is given by $(1-\lambda)S(t)$, where $t=0,T$.

We assume $\F_0$ is trivial and $\F_T$ is the completion of $\sigma(S(T))$.  Thus $R$ is $\F_T$ measurable.
For any $\F_T$ measurable random variable $Z$, we denote its cumulative distribute function (CDF) by $F_Z(\cdot)$ and its survival function by $S_Z(\cdot)$.
By definition, $S_Z(\cdot)=1-F_Z(\cdot)$.

We consider an investor with initial portfolio $(x_0, y_0)$.
That means the investor starts with $x_0$ and $y_0$ amount of money in the risk-free and the risky asset, respectively.
The investor chooses the amount of money to be additionally invested in the risky asset at time $0$, denoted by $\theta$,
and carried out to terminal time $T$ when it will be liquidated.
Denote the investor's terminal wealth after liquidation by $W(\theta)$.
A straightforward computation yields
\begin{equation*}
W(\theta)=(1+r)(x_0 - \theta)+ (1+R)(y_0 + \theta) - \lambda \Big[(1+R)(y_0 + \theta)^+ + (1+r) \theta^- \Big],
\end{equation*}
where $x^+:=\max\{x,0\}$ and $x^-:=\max\{-x,0\}$ for all $x \in \R$.

In this financial market, the non-arbitrage condition reads as
\begin{equation} \label{non_arbitrage}
\Po \Big( (1-\lambda)(1 + R) < 1+r \Big) > 0, \text{ and } \Po \Big( 1 + R >  (1-\lambda)(1+r) \Big) >0.
\end{equation}

\begin{rem}
We ignore two degenerated cases: (i) $(1-\lambda)(1 + R)\equiv 1+r$, and
(ii) $1 + R \equiv (1-\lambda)(1+r)$, under which the market is also arbitrage-free.
If $\lambda=0$, meaning the market is frictionless, then the non-arbitrage condition \eqref{non_arbitrage} simply reduces to
\[ 0 < \Po( R<r) < 1 .\]
\end{rem}

\subsection{The CPT Framework}
\label{subsection_CPT}
In \cite{TK92}, Tversky and Kahneman propose cumulative prospect theory (CPT) as a performance criterion for decision making under uncertainty. A CPT model is characterized by the following three key features.
\begin{enumerate}
\item Reference point

Behavioral studies show that people do not evaluate final outcomes directly but rather compare them to some benchmark, see, e.g., \cite{TK81}.
In CPT, a reference point $B$ is chosen to serve as the benchmark for evaluating an uncertain outcome.
Let $X$ denote the final wealth of an investment decision.
If $X \ge B$, $X - B$ is considered gains from the investment; if $X<B$, $B - X$ is viewed as losses.
For example, if $B$ is set to $0$, then the terminology of gains and losses fits into the common language.

\item Utility function

Investors are not universally risk averse, instead as documented in \cite{TK92},
they exhibit a distinctive  fourfold pattern of risk attitudes towards gains and losses.
To fit those risk attitudes, CPT applies an S-shaped utility function, which consists of two different functions, $u_+$ and $u_-$, for gains and losses, respectively.
The pathwise prospect utility of an investment strategy (with associated wealth  $X$) is defined by
\[ u_+(X(\omega) - B(\omega)) \cdot \bm{1}_{X(\omega)-B (\omega)\ge 0} -  u_-(B(\omega) - X(\omega)) \cdot \bm{1}_{X(\omega)-B(\omega) < 0},\]
for all $\omega \in \Omega$, where $\bm{1}_A$ is an indicator function of set $A$.
\vspace{1em}

We assume throughout this paper that $u_{\pm}: \R^+ \to \R^+$ are twice differentiable, strictly increasing, strictly concave and satisfy $u_{\pm}(0)=0$.
Those assumptions have perfect economic and mathematical explanations, and are consistent with people's investment behaviors.
For instance, the concavity in gains (gains are evaluated by $u_+$) and the convexity in losses (losses are evaluated by $-u_-$) capture both risk aversion and risk seeking.
In addition, studies also show that people tend to prefer avoiding losses to acquiring equivalent gains, i.e., people are more sensitive to losses than to gains, see, e.g., \cite{BKP04}, \cite{HZ11}, \cite{TK81}, \cite{TK92}, and \cite{Z15}.
In economics and decision theory, such behavior is referred to as \emph{loss aversion}.
In the mathematical modeling, to capture loss aversion, we assume
\[ u_-'(x) > u_+'(x), \; \text{ for all } x \ge 0.\]

In CPT application, the most common choice for the utility function is a piece-wise power utility, first proposed by \cite{TK92}
\begin{equation} \label{utility_TK92}
u_+(x)=x^\alpha, \text{ and } u_-(x)=kx^\beta , \; \text{ for all } x \ge 0,
\end{equation}
where $k>1, \, 0<\alpha \le \beta \le 1$.
Notice that the above assumptions on the parameters are sufficient conditions for all the assumptions made on the utility function to be satisfied.
In particular,  $k>1$ and $0<\alpha \le \beta \le 1$ together imply that $ u_-'(x) > u_+'(x)$ for all $x \ge 0$, and thus the loss aversion behavior.
Examples of applying a piece-wise power utility in CPT applications can be found in \cite{BH08}, \cite{BG10}, \cite[Section 5.1]{HZ11}, \cite[Section 4.1]{PS12}, \cite{RR13}, and many others.
In \cite{TK92}, the parameters are estimated as
\begin{equation*}
\alpha=\beta=0.88, \text{ and } k=2.25,
\end{equation*}
which clearly satisfy all the parameter assumptions above.
\vspace{1em}

The power utility function, given by \eqref{utility_TK92}, is unbounded from above, and hence may lead to an ill-posed problem (either infinite CPT value or infinite optimal investment), see \cite{HZ11} for detailed discussions.
Some also argue that the power utility function may fail to explain high risk averse behavior, as pointed out in \cite{RB11}.
As a consequence, some prefer to use a piece-wise exponential utility function in CPT applications,
see arguments in \cite{KW05}.
\vspace{1em}

A piece-wise exponential utility function is given by
\begin{equation} \label{exp_utility}
u_+(x) = 1 - e^{-\eta_+ x}, \text{ and } u_-(x) = \zeta \left( 1 - e^{-\eta_- x} \right),
\end{equation}
where $\eta_+, \, \eta_->0, \zeta>1$.
In most applications, $\eta_+ = \eta_-$ is also assumed, which together with $\zeta > 1$ yields the loss aversion.
A piece-wise exponential utility function has been used in CPT related optimization problems in \cite[Section 4.3]{PS12}, \cite{RR14}, and \cite{Z15}.
\vspace{1em}

%The authors of \cite{HZ11} claim that \emph{``the concavity/convexity condition (imposed on the utility function) is insignificant, and hence can be ignored".}
%They then propose a linear utility function, see also \cite[Section 4.2]{PS12},
%\[ u_+(x)=x, \text{ and } u_-(x)=kx, \, k>1.\]

%However, if an investor's preference is represented by a linear utility function, then the marginal utility is the same at any wealth level. According to linear utility preference, the pleasure of receiving 1 million is the same for a penniless investor and a billionaire, which clearly violates investor's behavior as captured by \emph{diminishing marginal utility}.  Hence, we argue that it is inappropriate to consider linear utility under the CPT framework, at least from the economic point of view.

\item Probability weighting function

Investors tend to overweigh extreme events (small probability events) but underweigh normal events (large probability events).
This behavior is captured in CPT by transforming objective cumulative probabilities into subjective cumulative probabilities using the probability weighting function (also called probability distortion function).
\vspace{1em}

The weighting function has a reverse S-shape, and two separate parts for gains and losses, denoted by $w_+(\cdot)$ and $w_-(\cdot)$, respectively.
We assume that $w_{\pm}: [0,1] \to [0,1]$ are strictly increasing and differentiable, and satisfy
\[ w_{\pm}(0)=0, \text{ and } \; w_{\pm}(1)=1.\]

The weighting function used in \cite{TK92} is given by
\begin{equation} \label{weighting_TK92}
w_+(x)= \frac{x^\gamma}{(x^\gamma + (1-x)^\gamma)^{1/\gamma}}, \text{ and }
w_-(x)= \frac{x^\delta}{(x^\delta + (1-x)^\delta)^{1/\delta}}.
\end{equation}

As pointed out in \cite{RW06}, the above weighting function may fail to be strictly increasing when $\gamma, \delta \le 0.25$, but are indeed strictly increasing when $\gamma , \delta \ge 0.5$. The condition for strictly increasing weighting function is relaxed to $\gamma , \delta \ge 0.28$ in \cite{BH08}.
The estimated parameters are $\gamma = 0.61$ and $\delta=0.69$ in \cite{TK92}, which satisfy all the desired assumptions.
\vspace{1em}

In \cite{P98}, Prelec introduces the following weighting function
\begin{equation} \label{weighting_P98}
w_+(x) = e^{-\delta^+ (-\ln(x))^\gamma} \text{ and } w_-(x) = e^{-\delta^- (-\ln(x))^\gamma},
\end{equation}
where $\gamma \in (0,1)$ and $\delta^+, \delta^->0$.
\cite{RW06} use Perlec's weighting function with $\delta^+=\delta^-=1$.
\end{enumerate}
\vspace{2em}
Let $X$ be a random wealth and $B$ the reference point. We define the positive prospect $V^+(X)$ and the negative prospect $V^-(X)$ for $X$ by
\begin{align*}
V^+(X) &:= \int_B^\infty u_+(x-B) \, d[-w_+(S_X(x))], \\
V^-(X) &:= \int_{-\infty}^B u_-(B-x) \, d[w_-(F_X(x))].
\end{align*}
The prospect utility of $X$ is defined by
\[ V(X) := V^+(X) - V^-(X),\]
given that $V^+(X)$ and  $V^-(X)$ are not both infinite at the same time.

Denote $D:=X-B$, then we have
\begin{align}
\label{CPT_def1}
V(X)&= V^+(X) - V^-(X) \notag \\
&= \int_0^\infty u_+(x) d[-w_+(S_D(x))] - \int_{-\infty}^0 u_-(-x) d[w_-(F_D(x))]\\
:&=V_D(D). \notag
\end{align}
Through integration by parts and change of variable, we rewrite $V_D(D)$ in \eqref{CPT_def1} as
\begin{equation}
\label{CPT_def2}
V_D(D)=\int_0^\infty  w_+ \big(S_D(x) \big) du_+(x)  - \int_0^{\infty} w_- \big(F_D(-x) \big)du_-(x).
\end{equation}

\subsection{The Problem}
\label{subsection_problem}
In the financial market described in Subsection \ref{subsection_market}, an investor selects investment strategy $\theta$ under the CPT framework introduced in Subsection \ref{subsection_CPT}.
In other words, the investor wants to maximize the prospect utility $V_D(W(\theta) - B)$, which is defined by \eqref{CPT_def1} or \eqref{CPT_def2}.
The investor's terminal wealth $W(\theta)$ is a function of the investment strategy $\theta$, so is the prospect utility $V_D(W(\theta) - B)$. We then denote
\[ J(\theta):=V_D(W(\theta) - B).\]

The reference point $B$ is given by
\begin{equation} \label{reference_point}
 B= W(0)=(1+r)x_0 + (1+R) \left(y_0 - \lambda y_0^+ \right),
\end{equation}
i.e., the reference point is the terminal wealth of the ``doing nothing strategy".
The above selection on the reference point can also be seen in \cite{BG10}, \cite[Section 5.1]{HZ11}, and \cite{PS12}.

In the above market setting, we implicitly assume that the investor we consider is a ``small investor", and his/her investment activities do not have any impact on the price of the risky asset.

$J(\theta)$ is finite if the following assumption holds.
\begin{asu} \label{assumption_finiteness_CPT}
We assume both $V^+(W(\theta))$ and $V^-(W(\theta))$ are finite for all $\theta \in \R$.
\end{asu}

\begin{prop} \label{prop_finiteness_CPT}
Assumption \ref{assumption_finiteness_CPT} is satisfied if one of the following conditions holds.
\begin{itemize}
  \item The risky return $R$ is bounded, e.g., $R$ is a discrete random variable and $|R| \neq \infty$.
  \item The risky return $R$ follows a normal distribution, log-normal distribution, or student-t distribution, and for $x$ small enough, there exists some $ 0<\epsilon<1$ such that
  \[ w_\pm'(x)= O \left(x^{-\epsilon} \right), \text{ and } w_\pm'(1-x)= O \left(x^{-\epsilon} \right).\]
\end{itemize}
\end{prop}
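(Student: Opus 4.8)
The plan is to reduce the finiteness of both $V^+(W(\theta))$ and $V^-(W(\theta))$ to a tail analysis of the single random variable $D=W(\theta)-B$, exploiting the fact that, for fixed $\theta$, $D$ is an \emph{affine} function of the risky return $R$. Substituting the reference point \eqref{reference_point} into the expression for $W(\theta)$ and collecting terms, one finds $D = pR+q$ with
\[ p=\theta+\lambda\bigl(y_0^+-(y_0+\theta)^+\bigr),\qquad q=-r\theta+\lambda\bigl(y_0^+-(y_0+\theta)^+\bigr)-\lambda(1+r)\theta^-, \]
which are constants since $(y_0+\theta)^+$ does not depend on $R$. Hence the left and right tails of $D$ agree, up to this affine change of variable, with those of $R$. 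Using representation \eqref{CPT_def2} it then suffices to show that the nonnegative integrals $V^+(W(\theta))=\int_0^\infty w_+(S_D(x))\,u_+'(x)\,dx$ and $V^-(W(\theta))=\int_0^\infty w_-(F_D(-x))\,u_-'(x)\,dx$ are finite. Two structural facts are used throughout: first, $0\le w_\pm\le 1$, so the distortion never enlarges the integrand; second, because $u_\pm$ are concave with $u_\pm(0)=0$, the ratio $u_\pm(x)/x$ is nonincreasing, so $u_\pm$ grow at most linearly and $u_\pm'$ are bounded on $[1,\infty)$.

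For the first condition I would argue as follows. If $R$ is bounded, then $D=pR+q$ is bounded, say $D\in[-M,M]$ almost surely, so $S_D(x)=0$ and $F_D(-x)=0$ for $x>M$; consequently $w_\pm$ vanish there and both integrals reduce to integrals over the compact interval $[0,M]$. On $[0,M]$ each integrand is dominated by $u_\pm'(x)$, whose integral is $u_\pm(M)-u_\pm(0)=u_\pm(M)<\infty$. This settles the bounded case, covering in particular any discrete $R$ with $|R|\neq\infty$.

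For the second condition I would split each integral as $\int_0^1+\int_1^\infty$. On $[0,1]$ the bound $w_\pm\le 1$ gives $\int_0^1 w_\pm(\cdot)\,u_\pm'(x)\,dx\le u_\pm(1)<\infty$, so only the tail matters. On $[1,\infty)$ the derivatives $u_\pm'$ are bounded, so finiteness reduces to $\int_1^\infty w_+(S_D(x))\,dx<\infty$ and $\int_1^\infty w_-(F_D(-x))\,dx<\infty$. Integrating the hypothesis $w_\pm'(s)=O(s^{-\epsilon})$ near $0$ yields $w_\pm(s)=O(s^{1-\epsilon})$ as $s\to 0$, so the tail integrands are $O\bigl(S_D(x)^{1-\epsilon}\bigr)$ and $O\bigl(F_D(-x)^{1-\epsilon}\bigr)$. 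One then inserts the tail estimates of $D$: for the normal and log-normal laws the tails decay faster than any power of $x$, so these quantities are integrable for every $\epsilon\in(0,1)$; for the student-$t$ law with polynomial tails $S_D(x)=O(x^{-\nu})$ one obtains $O\bigl(x^{-\nu(1-\epsilon)}\bigr)$, integrable once $\nu$ and $\epsilon$ satisfy $\nu(1-\epsilon)>1$.

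I expect the heavy-tailed case to be the main obstacle: unlike the Gaussian and log-normal tails, the polynomial student-$t$ tail forces one to balance the distortion exponent $\epsilon$, the tail index $\nu$, and the at-most-linear growth of $u_\pm$, and it is here that the growth condition on $w_\pm$ near $0$ does its real work (and where the boundary term $u_+(x)w_+(S_D(x))\to 0$ in the integration by parts must be checked). The symmetric condition $w_\pm'(1-x)=O(x^{-\epsilon})$ enters only if one argues instead through the density representation $V^+(W(\theta))=\int_0^\infty u_+(x)\,w_+'(S_D(x))\,f_D(x)\,dx$, where the blow-up of $w_+'$ near $1$ must be controlled for small $x$; working with the integration-by-parts form \eqref{CPT_def2}, in which $w_\pm$ appear bounded, sidesteps that difficulty and keeps all estimates elementary.
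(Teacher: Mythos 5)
Your proof is correct, and it is genuinely different from the paper's, because the paper gives essentially no proof: it declares the bounded case obvious and, for the second case, refers the reader to \cite[Proposition 1]{HZ11} and \cite[Proposition 2.1]{PS12}. What your argument supplies, and what the citation silently skips, is the reduction $D=pR+q$: the cited results are proved in frictionless markets where the wealth deviation is linear in the return, and the observation that with transaction costs $D$ is still affine in $R$ for each fixed $\theta$ (with $p$, $q$ depending on the signs of $\theta$ and $y_0+\theta$) is exactly what makes those results applicable here. Your formulas for $p$ and $q$ are correct (they reproduce $D=Z_1\theta$ for $\theta\ge 0$ and $D=Z_2\theta$ for $-y_0\le\theta\le 0$, consistently with Section \ref{section_case1}). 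The estimates themselves --- working in representation \eqref{CPT_def2} so that the distortions enter only as bounded factors, $w_\pm\le 1$ on the unit interval, $u_\pm'$ bounded on $[1,\infty)$ by concavity, and $w_\pm(s)=O(s^{1-\epsilon})$ obtained by integrating the hypothesis --- are the right elementary ones and match the spirit of the cited proofs. One simplification: the boundary term you flag in the integration by parts never needs checking, because both integrands are nonnegative and $w_+(S_D(\cdot))$, $w_-(F_D(-\cdot))$ are monotone, so Tonelli's theorem yields \eqref{CPT_def1} $=$ \eqref{CPT_def2} as an identity in $[0,+\infty]$; this also vindicates your remark that the hypothesis on $w_\pm'$ near $1$ plays no role in this representation.

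Your caveat in the student-$t$ case is not a weakness of your proof but a genuine limitation of the statement as written. For the general utility class of Subsection \ref{subsection_CPT}, $u_+$ may grow linearly (e.g., $u_+(x)=x+\ln(1+x)$), and then for $\nu=1$ (Cauchy) and the Tversky--Kahneman weighting \eqref{weighting_TK92} --- which satisfies the hypothesis with $\epsilon=1-\gamma$ --- one gets $w_+(S_D(x))$ of order $x^{-\gamma}$ with $\gamma<1$, so $V^+(W(\theta))=+\infty$. Hence some condition tying $\epsilon$ (and the growth of $u_\pm$) to the tail index, such as your $\nu(1-\epsilon)>1$, is unavoidable; the references the paper cites impose quantitative conditions of precisely this type, and the paper's loose wording obscures that. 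In short, you prove a corrected version of the second bullet; the paper's one-line citation hides both the affine reduction and this restriction.
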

\begin{proof} The first result is obvious. For the proof of the second result,
please refer to \cite[Proposition 1]{HZ11} and \cite[Proposition 2.1]{PS12}.
\qed
\end{proof}

We then formulate optimal investment problems with transaction costs under CPT as follows.

\begin{prob} \label{main_problem}
In a financial market with transaction costs (as modeled in Subsection \ref{subsection_market}), an investor seeks the optimal investment strategy to maximize the prospect utility $J(\theta)$ of his/her terminal wealth. Equivalently, the investor seeks the maximizer $\theta^*$ to the problem
\[ J(\theta^*)=\sup_{\theta \in \R} \; J(\theta) = \sup_{\theta \in \R} \; V_D(W(\theta) - B). \]
\end{prob}

\section{Explicit Solution When $R$ Has a Continuous Distribution}
\label{section_case1}
In this section, we consider the case in which the risky return $R$ has a continuous distribution.
To obtain explicit solutions to Problem \ref{main_problem}, we assume all the assumptions below hold in this section.

\begin{asu} \label{assumption_case1}
\quad

\begin{enumerate}
  \item The initial position on the risky asset is positive, $y_0 > 0$.
  \item Short-selling is not allowed, i.e., $\theta \ge - y_0 $.
  \item The utility function is of power type, given by \eqref{utility_TK92}.
\end{enumerate}
\end{asu}

\begin{rem} We make some comments on Assumption \ref{assumption_case1}.
We do allow investors to sell the risky asset ($\theta$ can be negative), but no more than what they currently own.
However, the no short-selling constraint imposed in \cite{BG10} is equivalent to $\theta \ge 0$.
The case of $y_0 < 0$ is less interesting since the no short-selling constraint then implies $\theta > 0$.
\end{rem}

\subsection{Main Results}
To find the optimal solution to Problem \ref{main_problem}, we separate the prime problem $\sup_{\theta \in \R} J(\theta)$ into two sub-problems:
\[\text{(P1)}  \;  \sup_{\theta \ge 0} \, J(\theta), \text{ and } \text{(P2)} \;
\sup_{\theta \le 0} J(\theta) . \]
By comparing the optimal prospect utility of the two sub-problems (P1) and (P2), we obtain the optimal investment strategy to Problem \ref{main_problem}.
Since in Assumption \ref{assumption_case1}, we impost the no short-selling constraint, i.e., $\theta \ge -y_0$, sub-problem (P2) is reduced to
\[\text{(P2)} \; \sup_{-y_0 \le \theta \le 0} J(\theta) . \]

Denote $A_1$ and $A_2$ as the set of losses for all ``buy strategies'' ($\theta \ge 0$) and all ``sell strategies'' ($\theta \le 0$), respectively.
Namely, if $\omega \in A_1$ (or $\omega \in A_2$), then $D = W(\theta) - B < 0$ for all $\theta \ge 0$ (or for all $\theta \le 0$).
The formal mathematical definitions of $A_1$ and $A_2$ are deferred in the corresponding subsections.
Define
\[ Z_1:= (1-\lambda)(1+R) - (1+r),\;\;\text{and} \;\; Z_2:= (1-\lambda) (R-r),\]
as well as
\begin{equation}\label{def_theta1_theta2}
   \theta_1:= \left( \frac{\alpha}{\beta  k} K_1(Z_1) \right) ^{\frac{1}{\beta - \alpha} }, \;\;\text{and} \;\;
   \theta_2:= - \left( \frac{\alpha}{\beta  k} K_2(Z_2) \right) ^{\frac{1}{\beta - \alpha} },
\end{equation}
where $K_1(Z_1)$ and $K_2(Z_2)$ are defined in the sequel by \eqref{eqn_K1} and \eqref{def_K2}.
Denote
\[ K_M = \max\{ K_1(Z_1), K_2(Z_2)\}.\]

We then summarize the main results of this section in the theorem below.

\begin{thm} \label{thm_1case}
If Assumption \ref{assumption_case1} holds, we have the following results for the optimal investment $\theta^*$ to Problem \ref{main_problem}.
\begin{enumerate}
  \item $\theta^* = 0$ if one of the following conditions holds:
  \begin{enumerate}
    \item $\Po (A_1) =\Po (A_2) =1$;
    \item $\Po (A_1) =1$, $0 < \Po (A_2) <1$, $\alpha = \beta$, and $k>K_2(Z_2)$;
    \item $0<\Po (A_1)<1$, $\Po (A_2) =1$, $\alpha = \beta$, and $k>K_1(Z_1)$;
     \item $0<\Po (A_1),\Po (A_2) <1$, $\alpha = \beta$, and $k>K_M$.
  \end{enumerate}
  \vspace{0.5em}

  \item $\theta^* = \theta_1$ if one of the following conditions holds:
  \begin{enumerate}
    \item $0<\Po (A_1)<1$, $\Po (A_2) =1$, and $\alpha < \beta$;
    \item $0<\Po (A_1),\Po (A_2) <1$, $\alpha < \beta$, and $J(\theta_1) \ge \max\{J(\theta_2), J(-y_0) \}$.
  \end{enumerate}
  \vspace{0.5em}

  \item $\theta^* = \theta_2$ if one of the following conditions holds:
  \begin{enumerate}
    \item $\Po (A_1)=1$, $0<\Po (A_2) <1$, $\alpha < \beta$, and $-y_0 \le \theta_2$;
    \item $0<\Po (A_1),\Po (A_2) <1$, $\alpha < \beta$, $-y_0 \le \theta_2$, and $J(\theta_2) \ge J(\theta_1)$.
  \end{enumerate}
  \vspace{0.5em}

  \item $\theta^* = -y_0$ if one of the following conditions holds:
  \begin{enumerate}
    \item $\Po(A_1)=1$ and $\Po(A_2) =0$;
    \item $\Po (A_1)=1$, $0<\Po (A_2) <1$, $\alpha = \beta$, and $k < K_2(Z_2)$;
    \item $\Po (A_1)=1$, $0<\Po (A_2) <1$, $\alpha < \beta$, and $-y_0 > \theta_2$;
    \item $0<\Po (A_1),\Po (A_2) <1$, $\alpha < \beta$, $-y_0 \ge \theta_2$, and $J(-y_0) \ge J(\theta_1)$.
  \end{enumerate}
  \vspace{0.5em}

  \item Any $\theta^* \in [0, +\infty)$ is an optimal strategy if one of the following conditions holds:
  \begin{enumerate}
    \item $0<\Po (A_1)<1$, $\Po (A_2) =1$, $\alpha = \beta$, and $k = K_1(Z_1)$;
    \item $0<\Po (A_1),\Po (A_2) <1$, $\alpha = \beta$, and $k = K_1(Z_1) > K_2(Z_2)$.
  \end{enumerate}
  \vspace{0.5em}

  \item Any $\theta^* \in [-y_0, 0]$ is an optimal strategy if one of the following conditions holds:
  \begin{enumerate}
    \item $\Po (A_1)=1$, $0<\Po (A_2) <1$, $\alpha = \beta$, and $k = K_2(Z_2)$;
    \item $0<\Po (A_1),\Po (A_2) <1$, $\alpha = \beta$, and $k = K_2(Z_2) > K_1(Z_1)$.
  \end{enumerate}
  \vspace{0.5em}

  \item Any $\theta^* \in [-y_0, \infty)$ is an optimal strategy if the following condition holds:

  $0<\Po (A_1),\Po (A_2) <1$, $\alpha = \beta$, and $k = K_1(Z_1) = K_2(Z_2)$.
  \vspace{0.5em}

  \item $\theta^* = +\infty$ if one of the following conditions holds:
  \begin{enumerate}
    \item $0<\Po (A_1)<1$, $\Po (A_2) =1$, $\alpha = \beta$, and $k<K_1(Z_1)$;
    \item $0<\Po (A_1),\Po (A_2) <1$, $\alpha = \beta$, and $k<K_M$.
  \end{enumerate}
\end{enumerate}
\end{thm}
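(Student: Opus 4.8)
The plan is to exploit the positive homogeneity that the power utility \eqref{utility_TK92} induces on the prospect functional, which collapses each of the two subproblems (P1) and (P2) into a one–dimensional calculus exercise, after which the theorem follows by comparing the two optimal values. I would begin by substituting the reference point \eqref{reference_point} into $D=W(\theta)-B$ and simplifying the kinks in $W(\theta)$ using $y_0>0$ together with the sign of $\theta$. For a buy strategy $\theta\ge 0$ one has $\theta^-=0$ and $(y_0+\theta)^+=y_0+\theta$, so the transaction–cost terms combine to give $D=\theta Z_1$; for a sell strategy $-y_0\le\theta\le 0$ one has $\theta^-=-\theta$ and $(y_0+\theta)^+=y_0+\theta$, which yields $D=\theta Z_2$. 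This identifies the loss sets as $A_1=\{Z_1<0\}$ and $A_2=\{Z_2>0\}$ and shows that the first part of the non-arbitrage condition \eqref{non_arbitrage} forces $\Po(A_1)>0$, which is why $\Po(A_1)=0$ never appears among the cases.

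Because $D$ is linear in $\theta$ on each half-line and $u_\pm$ are power functions, I would next perform the change of variable $x\mapsto x/|\theta|$ in \eqref{CPT_def2} to obtain, for $\theta\ge 0$, the representation $J(\theta)=\theta^\alpha V^{+}-\theta^\beta V^{-}$, where $V^{\pm}$ are the (constant) positive and negative prospect values associated with $Z_1$; an entirely symmetric computation in the variable $s=-\theta\in[0,y_0]$ gives $J(-s)=s^\alpha \widehat V^{+}-s^\beta \widehat V^{-}$ from $Z_2$. The constants $K_1(Z_1)$ and $K_2(Z_2)$ of \eqref{eqn_K1}--\eqref{def_K2} are then precisely the normalized ratios $kV^{+}/V^{-}$ and $k\widehat V^{+}/\widehat V^{-}$, so that the stationarity condition $\partial_\theta J=0$ reproduces $\theta_1$ and $\theta_2$ of \eqref{def_theta1_theta2}.

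Each reduced objective has the form $g(t)=c_{+}t^\alpha-c_{-}t^\beta$ on $t\ge 0$ with $c_{\pm}\ge 0$, so I would analyze it by elementary calculus in three regimes. When $\alpha<\beta$ and $c_{+},c_{-}>0$, the function $g$ is strictly unimodal with interior maximizer $\theta_1$ for (P1) and $s^{*}=-\theta_2$ for (P2); for (P2) I would intersect this with the constraint $s\le y_0$, obtaining either $\theta_2$ or the boundary point $-y_0$ according to the sign of $\theta_2+y_0$. When $\alpha=\beta$, $g(t)=(c_{+}-c_{-})t^\alpha$ is monotone, so the sign of $c_{+}-c_{-}$ --- equivalently the comparison of $k$ with $K_1$ or $K_2$ --- decides between $0$, the boundary, or the ill-posed value $+\infty$, while the flat case $k=K_i$ produces the continua of optimizers in cases (5)--(7). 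The degenerate probabilities are absorbed by noting that $\Po(A_1)=1\Rightarrow V^{+}=0$, $\Po(A_2)=1\Rightarrow\widehat V^{+}=0$, and $\Po(A_2)=0\Rightarrow\widehat V^{-}=0$, each of which kills one term of $g$ and reduces it to a monotone function.

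Finally, since both subproblems share the common value $J(0)=0$, the global optimum is the larger of the two optimal values, which produces the explicit comparisons between $J(\theta_1)$, $J(\theta_2)$, and $J(-y_0)$ appearing in cases (2)--(4); note also that $+\infty$ can only arise from (P1), since (P2) is bounded by the constraint $\theta\ge -y_0$. The main obstacle is not any single estimate but the exhaustive and mutually exclusive bookkeeping: one must verify that the hypotheses on $\big(\Po(A_1),\Po(A_2),\ \alpha\ \text{versus}\ \beta,\ k\ \text{versus}\ K_i\big)$ genuinely partition the parameter space and that the boundary and flat cases are assigned consistently across (P1) and (P2). Care is also needed to justify the change of variable of the second step rigorously under Assumption \ref{assumption_finiteness_CPT}, so that $V^{\pm}$ and $\widehat V^{\pm}$ are finite and the interchange of scaling with the Choquet integral in \eqref{CPT_def2} is valid.
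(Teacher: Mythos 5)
Your proposal is correct and follows essentially the same route as the paper: the same split into the buy problem (P1) and the constrained sell problem (P2), the same linear reduction $D=\theta Z_1$ (resp.\ $D=\theta Z_2$) and scaling of the power utility to get $J(\theta)=c_+|\theta|^\alpha-c_-|\theta|^\beta$, the same one-dimensional calculus distinguishing $\alpha=\beta$ (monotone/flat, governed by $k$ versus $K_i$) from $\alpha<\beta$ (unique interior maximizer $\theta_1$ or $\theta_2$, clipped at $-y_0$), and the same final comparison of the two sub-problem optima. The degenerate cases ($\Po(A_1)=1$, $\Po(A_2)\in\{0,1\}$) are also handled exactly as in the paper, by noting that they annihilate one term of the reduced objective.
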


The rigorous proof of Theorem \ref{thm_1case} is an immediate observation once we have solved the two sub-problems (P1) and (P2) in Subsection \ref{subsection_1case_p1} and Subsection \ref{subsection_1case_p2}.
We provide some economic meanings and incomplete mathematical reasonings for some representative cases in Theorem \ref{thm_1case} (the rest can be interpreted in a similar way),
which readers may find helpful to understand Theorem \ref{thm_1case}.

\begin{rem} \label{rem_case1}
\begin{itemize}
  \item $\Po(A_1) = 1$ (alternatively, $\Po(A_2) = 1$) means that all long \mbox{strategies} (alternatively, all short strategies) result in losses almost surely.
  Hence, if both $\Po(A_1) = 1$ and $\Po(A_2) = 1$, the optimal investment strategy is obviously zero, i.e., $\theta^*=0$ as in Case (1a).

  \item In (P1), we have sign$(J'(\theta))$ = sign$(K_1(Z_1) - k)$ when $\alpha = \beta$.
  In (P2), we have sign$(J'(\theta))$ = sign$(K_2(Z_2) - k)$ when $\alpha = \beta$ (both are shown later).
  \begin{itemize}
    \item If $\Po(A_1) = 1$, as a consequence of the analysis above, the optimal investment strategy will be a short strategy ($\theta^* \in [-y_0 ,0]$).
    In the case when $\alpha = \beta$ and $k < K_2(Z_2)$, $J'(\theta) <0$,  and $J(\theta)$ is a decreasing function, implying that $\theta^* = -y_0$, which is exactly the result of Case (4b).
    The same analysis also applies to Case (1b) and Case (6a).

    \item We directly obtain the results from symmetric cases when $\Po(A_2) = 1$ and $\alpha = \beta$, as in Case(1c), Case (5a), and Case (8a)
  \end{itemize}

  \item In all non-trivial cases (i.e., investors may end up in gains or losses with strictly positive probabilities) and when $\alpha<\beta$, $J(\theta)$ has a unique maximizer $\theta_1>0$ in (P1) and a unique maximizer $\theta_2<0$ in (P2).
  In (P1), the feasible region is unbounded, so $\theta_1$ is achievable. But in (P2), the feasible region is bounded below by $-y_0$, so $\theta_2$ is the maximizer if and only if $\theta_2 \ge -y_0$; such extra condition can be seen in Case (3a), Case (3b), Case (4c), and Case (4d).
  The optimal investment strategy is then obtained by comparing $J(\theta_1)$ and $J(\theta_2)$ (or $J(-y_0)$), see, for instance, Case (2b) and Case (3b).

  If the constraint $\theta \ge -y_0$ is not binding in the above situation (i.e., $0<\Po(A_1), \Po(A_2) <1$ and $\alpha < \beta$), we obtain finer results for Case (2b) and Case (3b):
\begin{itemize}
\item[(i)] If $\alpha<\beta$ and $(g_1(Z_1))^\beta / (l_1(Z_1))^\alpha \ge (g_2(Z_2))^\beta / (l_2(Z_2))^\alpha$, then $\theta^* = \theta_1$.

\item[(ii)] If $\alpha<\beta$ and $(g_1(Z_1))^\beta / (l_1(Z_1))^\alpha < (g_2(Z_2))^\beta / (l_2(Z_2))^\alpha$, then $\theta^* = \theta_2$.
\end{itemize}
The above results are based on the comparison between $J(\theta_1)$ and $J(\theta_2)$, see \cite[Appendix]{HZ11}.

\item The optimal investment $\theta^*$ and the optimal prospect $J(\theta^*)$ are both finite in all cases except Case (8), in which $\theta^* = +\infty$.
If either of the conditions in Case (8) is satisfied, Problem \ref{main_problem} is ill-posed.
Please see \cite[Section 3]{HZ11} for more discussions on the well posedness of the problem.
\end{itemize}
\end{rem}

\subsection{Solution to Sub-Problem (P1)}
\label{subsection_1case_p1}
If $\theta \ge 0$, then $y_0 + \theta \ge 0$. Hence the investor needs to sell all the holdings in the risky asset at liquidation.

Recall $Z_1 = (1-\lambda) (1+R) - (1 +r)$, we obtain
\[ D=W(\theta) - B = Z_1 \cdot \theta.\]

Define set $A_1$ by
$A_1:=\{ Z_1 <0\} = \left\{ 1+R < \frac{1+r}{1- \lambda} \right\}$ .
Notice that set $A_1$ is the set of losses for the investor (recall $\theta \ge 0$). Due to the non-arbitrage condition \eqref{non_arbitrage},  $\Po(A_1)>0$.

By definition \eqref{CPT_def1} and change of variable ($x = z \theta$, $\theta>0$), the prospect utility $J(\theta)$ in sub-problem (P1) can be rewritten as
\begin{align*}
J(\theta) &= \int_0^\infty x^\alpha d[-w_+(S_D(x))] - \int_{-\infty}^0 k(-x)^\beta d[w_-(F_D(x))] \\
&= \int_0^\infty z^\alpha d\left[ -w_+ \big(S_{Z_1}(z) \big) \right] \cdot \theta^\alpha -\int_{-\infty}^0 (-z)^\beta d\left[w_- \big(F_{Z_1}(z) \big) \right] \cdot k \theta^\beta.
\end{align*}

We define, for any $\F_T$ measurable random variable $Z$, that
\begin{equation} \label{def_g1_l1}
\begin{split}
g_1(Z):&= \int_0^\infty z^\alpha d\left[ -w_+ \big(S_{Z}(z) \big) \right],\\
l_1(Z):&=\int_{-\infty}^0 (-z)^\beta d\left[w_- \big(F_{Z}(z) \big) \right].
\end{split}
\end{equation}
In general, $g_1(Z)$ (or $l_1(Z)$) can be understood as the prospect value of gains (or losses) of random wealth $X$ with $X-B=Z$.
In our setting here, $g_1(Z_1)$ and $l_1(Z_1)$ are exactly the prospect value of gains and the prospect value of losses (differ by a scalar $k$) of $W(1)$, which is the terminal wealth associated with the strategy $\theta=1$. Mathematically, we have $g_1(Z_1) = V^+(W(1))$ and $k \cdot  l_1(Z_1)=V^-(W(1))$.

With the definitions of $g_1$ and $l_1$, the prospect utility $J(\theta)$ is simplified as
\begin{equation*}
J(\theta) = g_1(Z_1) \cdot \theta^\alpha - l_1(Z_1) \cdot k \theta^\beta, \; \theta\ge 0.
\end{equation*}

Define $K_1(Z_1)$ by
\begin{equation}\label{eqn_K1}
  K_1(Z_1) : = \dfrac{g_1(Z_1)}{l_1(Z_1)} .
\end{equation}
Since $\Po(A_1)>0$, $K_1(Z_1)$ is well defined and $K_1(Z_1) > 0$.
Notice that $K_1(Z)$ shares similar features as the \emph{Omega Measure} proposed by \cite{KS02}, see \cite[Section 4.1]{BG10} for comparisons.

We summarize the solution to sub-problem (P1) below.

\begin{thm} \label{thm_1case_positive}
If Assumption \ref{assumption_case1} holds, then the optimal solution $\theta^*$ to sub-problem (P1) is obtained from one of the following scenarios.

\begin{enumerate}
  \item $\theta^*=0$ if either (a) $\Po(A_1)= 1$ or (b) $0 < \Po(A_1) <1$, $\alpha=\beta$, and $k > K_1(Z_1)$ holds.

  \item $\theta^*= \theta_1$, given by \eqref{def_theta1_theta2}, if $0 < \Po(A_1) <1$ and $\alpha<\beta$.

  \item Any $\theta^* \in [0, +\infty)$ is an optimal solution if $0 < \Po(A_1) <1$, $\alpha=\beta$, and $k = K_1(Z_1)$.

  \item $\theta^*=+\infty$ if $0 < \Po(A_1) <1$, $\alpha=\beta$, and $k < K_1(Z_1)$.
\end{enumerate}
\end{thm}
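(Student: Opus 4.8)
The plan is to work directly with the closed-form expression $J(\theta) = g_1(Z_1)\,\theta^\alpha - k\,l_1(Z_1)\,\theta^\beta$ on $[0,\infty)$ established above, and to classify the maximizer according to the value of $\Po(A_1)$ and according to whether $\alpha < \beta$ or $\alpha = \beta$. Once the two coefficients are understood, the whole argument is elementary single-variable calculus, so the real content lies in organizing the case distinctions cleanly. The first step I would carry out is to record two positivity facts that drive everything: from the definitions in \eqref{def_g1_l1}, together with the strict monotonicity of $w_{\pm}$ and of $z \mapsto z^\alpha$, $z \mapsto (-z)^\beta$, one has $l_1(Z_1) > 0$ iff $\Po(Z_1 < 0) = \Po(A_1) > 0$, and $g_1(Z_1) > 0$ iff $\Po(Z_1 > 0) > 0$. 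Because $R$ has a continuous distribution throughout this section, $\Po(Z_1 = 0) = 0$, so the latter is equivalent to $\Po(A_1) < 1$; moreover the non-arbitrage condition \eqref{non_arbitrage} guarantees $\Po(A_1) > 0$, hence $l_1(Z_1) > 0$ in every case.

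Next I would dispose of the case $\Po(A_1) = 1$: here $g_1(Z_1) = 0$ while $l_1(Z_1) > 0$, so $J(\theta) = -k\,l_1(Z_1)\,\theta^\beta \le 0$ for all $\theta \ge 0$ with equality only at $\theta = 0$, giving $\theta^* = 0$ as in scenario (1a). I then turn to the substantive regime $0 < \Po(A_1) < 1$, in which $g_1(Z_1), l_1(Z_1) > 0$ and $K_1(Z_1)$ of \eqref{eqn_K1} is well defined and strictly positive. When $\alpha < \beta$, I would differentiate, $J'(\theta) = \alpha g_1(Z_1)\,\theta^{\alpha-1} - \beta k\,l_1(Z_1)\,\theta^{\beta-1}$, and note that $J'(\theta) = 0$ has the unique positive root $\theta_1 = (\tfrac{\alpha}{\beta k}K_1(Z_1))^{1/(\beta-\alpha)}$ of \eqref{def_theta1_theta2}. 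Since $J(\theta) > 0$ near $0$ (the $\theta^\alpha$ term dominates) and $J(\theta) \to -\infty$ as $\theta \to \infty$ (the $\theta^\beta$ term dominates), $J$ first increases and then decreases, so $\theta_1$ is the global maximizer, yielding scenario (2).

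It remains to treat the borderline exponent $\alpha = \beta$ within the regime $0 < \Po(A_1) < 1$. Here the expression collapses to $J(\theta) = l_1(Z_1)\,(K_1(Z_1) - k)\,\theta^\alpha$, whose sign on $(0,\infty)$ is exactly that of $K_1(Z_1) - k$. This immediately splits into three subcases: for $k > K_1(Z_1)$ we get $J(\theta) < 0 = J(0)$ and hence $\theta^* = 0$ (scenario 1b); for $k = K_1(Z_1)$ we get $J \equiv 0$ on $[0,\infty)$, so every $\theta \ge 0$ is optimal (scenario 3); and for $k < K_1(Z_1)$ the function $J$ is strictly increasing with $J(\theta) \to +\infty$, so no finite maximizer exists, $\theta^* = +\infty$, and the problem is ill-posed (scenario 4).

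There is no deep obstacle in the argument; the only points requiring genuine care are the two positivity equivalences of the first step — in particular invoking the continuity of $R$ to rule out an atom of $Z_1$ at the origin, so that $\Po(A_1) < 1$ really forces $g_1(Z_1) > 0$ — and confirming that the critical point $\theta_1$ in the case $\alpha < \beta$ is a true maximizer rather than merely a stationary point, for which the boundary behavior of $J$ at $0$ and at $+\infty$ already suffices.
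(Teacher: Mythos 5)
Your proposal is correct and follows essentially the same route as the paper's proof: reduce to the closed form $J(\theta) = g_1(Z_1)\,\theta^\alpha - k\,l_1(Z_1)\,\theta^\beta$, dispose of $\Po(A_1)=1$ via $g_1(Z_1)=0$, and then do elementary calculus with a case split on $\alpha<\beta$ versus $\alpha=\beta$ and on the sign of $K_1(Z_1)-k$. The only (harmless) differences are that you verify $\theta_1$ is the global maximizer via the boundary behavior of $J$ at $0$ and $+\infty$ rather than the sign pattern of $J'$ on $(0,\theta_1)$ and $(\theta_1,\infty)$, and you make explicit the positivity facts $l_1(Z_1)>0$ and $g_1(Z_1)>0$ (using the continuity of $R$) that the paper leaves implicit.
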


\begin{proof} If $\Po(A_1)=1$, then the probability of suffering losses is $1$ for all long strategies $\theta \ge 0$. Thus it is never optimal to buy the risky asset, i.e., $\theta^*=0$.
Mathematically, $\Po(A_1)=\Po(Z_1<0)=1 \Rightarrow g_1(Z_1)=0$.
Then we have
\[ J(\theta) = -l_1(Z_1) \cdot k\theta^\beta < J(0)=0, \text{ for all }\theta >0,\]
which directly indicates $\theta^*=0$.

We next consider the non-trivial case: $0 < \Po(A_1) <1$.

Differentiating $J(\theta)$ gives
\[ J'(\theta) = l_1(Z_1) \, \theta^{\alpha - 1} \left[ \alpha \cdot K_1(Z_1) -  \beta k \cdot \theta^{\beta - \alpha} \right] .\]

If $\alpha = \beta$,  we have sign$(J'(\theta))$ = sign$(K_1(Z_1) - k)$.
Hence, $J(\theta)$ is either a strictly decreasing or strictly increasing function or a constant, depending on the value of $k$.
If the condition in Case (4) is satisfied, $J(\theta)$ is strictly increasing and $\lim_{\theta \to \infty} J(\theta) = +\infty$. Thus (P1) is an ill-posed problem.

If $\alpha<\beta$, then $\theta_1$, defined by \eqref{def_theta1_theta2}, is the unique solution to $J'(\theta)=0$ on the positive axis.
Furthermore, $J'(\theta)>0$ for all $\theta \in (0,\theta_1)$ and $J'(\theta)<0$ for all $\theta \in (\theta_1, \infty)$.
Therefore, $\theta_1$ is the unique maximizer to (P1).
\qed
\end{proof}

\subsection{Solution to Sub-Problem (P2)}
\label{subsection_1case_p2}
Due to the no short-selling constraint in Assumption \ref{assumption_case1}, we have $y_0 + \theta \ge 0$ for all $-y_0 \le \theta \le 0$. So the liquidation order at terminal time $T$ is to sell all the risky assets.

Recall $Z_2=(1-\lambda)(R-r)$, we obtain in this case that
\[ D=W(\theta) - B = Z_2 \cdot \theta .\]

Define set $A_2$ by
$ A_2: = \{Z_2 > 0\} = \left\{ R>r \right\}.$
Since investment $\theta$ is restricted to short strategies ($\theta \le 0$) in this subsection, the difference $D$ is negative on set $A_2$, meaning that set $A_2$ is the set of losses for the investor.

In this case, we obtain $J(\theta)$ as follows:
\begin{align*}
J(\theta) &= \int_0^\infty x^\alpha d[-w_+(S_D(x))] - \int_{-\infty}^0 k(-x)^\beta d[w_-(F_D(x))] \\
&= \int_{-\infty}^0 (-z)^\alpha d\left[ w_+ \big(F_{Z_2}(z) \big) \right] (-\theta)^\alpha -
\int_0^\infty z^\beta d\left[- w_- \big(S_{Z_2}(z) \big) \right] \cdot k (-\theta)^\beta,
\end{align*}
where we have applied the change of variable $ x = z\theta$ ($\theta<0$) in the second equality.

We define, for any $\F_T$ measurable random variable $Z$, that
\begin{equation} \label{def_g2_l2}
\begin{split}
g_2(Z):&= \int_{-\infty}^0 (-z)^\alpha d\left[ w_+ \big(F_{Z}(z) \big) \right],\\
l_2(Z):&= \int_0^\infty (z)^\beta d\left[- w_- \big(S_{Z}(z) \big) \right].
\end{split}
\end{equation}
The economic meanings of $g_2(Z)$ and $l_2(Z)$ are similar to those of $g_1(Z)$ and $l_1(Z)$, except the gains/losses are located on exactly opposite tails due to
the different signs of $\theta$ in two cases.

Define $ K_2(Z_2)$ by
\begin{equation}\label{def_K2}
  K_2(Z_2):=\frac{g_2(Z_2)}{l_2(Z_2)},
\end{equation}
which is well defined and strictly positive if $\Po(A_2)>0$.

Using the notations of $g_2(\cdot)$ and $l_2(\cdot)$, we rewrite  $J(\theta)$ as
\begin{equation*}
J(\theta) = g_2(Z_2) \cdot (-\theta)^\alpha -  l_2(Z_2) \cdot k (-\theta)^\beta, \; \text{ for } -y_0 \le \theta \le 0.
\end{equation*}
The unique solution to $J'(\theta)=0$ on the negative axis is $\theta_2$, given by \eqref{def_theta1_theta2}.

We directly provide the results to sub-problem (P2). Please refer to Theorem \ref{thm_1case_positive} for a similar proof.

\begin{thm} \label{thm_1case_negative}
If Assumption \ref{assumption_case1} holds, then the optimal solution $\theta^*$ to sub-problem (P2) is obtained from one of the following scenarios.

\begin{enumerate}
  \item $\theta^*=0$ if either (a) $\Po(A_2)= 1$ or (b) $0 < \Po(A_2) <1$, $\alpha=\beta$, and $k > K_2(Z_2)$ holds.

  \item $\theta^*= \theta_2$, given by \eqref{def_theta1_theta2}, if $0 < \Po(A_2) <1$, $\alpha<\beta$, and $\theta_2 \ge -y_0$.

  \item Any $\theta^* \in [-y_0, 0]$ is an optimal solution if $0 < \Po(A_2) <1$, $\alpha=\beta$, and $k = K_2(Z_2)$.

  \item $\theta^*=-y_0$ if one of the following conditions holds:
  \begin{enumerate}
    \item $\Po(A_2)= 0$;
    \item $0 < \Po(A_2) <1$, $\alpha=\beta$, and $k < K_2(Z_2)$;
    \item $0 < \Po(A_2) <1$, $\alpha<\beta$, and $\theta_2 < -y_0$.
  \end{enumerate}

\end{enumerate}
\end{thm}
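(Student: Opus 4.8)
The plan is to mirror the proof of Theorem \ref{thm_1case_positive}, exploiting the fact that on the short side the objective has the same algebraic form, the only genuine novelty being that the feasible set $[-y_0,0]$ is now compact. First I would introduce the substitution $s:=-\theta\in[0,y_0]$, under which
\[ J(\theta)=g_2(Z_2)\,s^\alpha-k\,l_2(Z_2)\,s^\beta=:\tilde{J}(s), \]
so that maximizing $J$ over $\theta\in[-y_0,0]$ is the same as maximizing $\tilde{J}$ over $s\in[0,y_0]$. This reduces everything to a one-variable optimization of exactly the type already handled in (P1), except that $s$ is now bounded above by $y_0$.

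Next I would dispose of the two degenerate cases. If $\Po(A_2)=1$, then $Z_2>0$ almost surely forces $F_{Z_2}(z)\equiv 0$ on $(-\infty,0)$, hence $g_2(Z_2)=0$, so $\tilde{J}(s)=-k\,l_2(Z_2)\,s^\beta\le 0$ with equality only at $s=0$; this gives $\theta^*=0$ (Case 1a). Symmetrically, if $\Po(A_2)=0$ then $S_{Z_2}(z)\equiv 0$ on $[0,\infty)$, hence $l_2(Z_2)=0$, so $\tilde{J}(s)=g_2(Z_2)\,s^\alpha$ is strictly increasing and the maximum is attained at the boundary $s=y_0$, i.e. $\theta^*=-y_0$ (Case 4a).

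For the non-trivial regime $0<\Po(A_2)<1$, both $g_2(Z_2)$ and $l_2(Z_2)$ are strictly positive, and I would differentiate to obtain
\[ \tilde{J}'(s)=l_2(Z_2)\,s^{\alpha-1}\left[\alpha\,K_2(Z_2)-\beta k\,s^{\beta-\alpha}\right]. \]
When $\alpha=\beta$ the bracket collapses to $\alpha\big(K_2(Z_2)-k\big)$, so the sign of $\tilde{J}'$ is constant and equal to $\mathrm{sign}\big(K_2(Z_2)-k\big)$; reading off the three possibilities yields $\theta^*=0$ when $k>K_2(Z_2)$ (Case 1b), $\theta^*=-y_0$ when $k<K_2(Z_2)$ (Case 4b), and an entirely flat objective — hence any $\theta^*\in[-y_0,0]$ — when $k=K_2(Z_2)$ (Case 3). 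When $\alpha<\beta$, the bracket is strictly decreasing in $s$ and vanishes exactly once, at the point $-\theta_2$ with $\theta_2$ defined in \eqref{def_theta1_theta2}; thus $\tilde{J}$ is strictly increasing then strictly decreasing, with unconstrained maximizer $-\theta_2$.

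The only step requiring care beyond the (P1) argument — and the one I expect to be the crux — is the interaction of this interior maximizer with the upper bound $y_0$. If $-\theta_2\le y_0$, equivalently $\theta_2\ge -y_0$, the unconstrained peak is feasible and $\theta^*=\theta_2$ (Case 2); otherwise $y_0$ lies on the strictly increasing part of $\tilde{J}$, the restricted maximum is pushed to the boundary, and $\theta^*=-y_0$ (Case 4c). Because the feasible set is compact, no ill-posed $\theta^*=\pm\infty$ scenario can arise, which is precisely why (P2) has one fewer case than (P1).
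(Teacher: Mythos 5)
Your proof is correct and follows essentially the same route as the paper, which itself only sketches this result by pointing to the proof of Theorem \ref{thm_1case_positive}: degenerate cases via $g_2(Z_2)=0$ or $l_2(Z_2)=0$, then a sign analysis of the derivative splitting on $\alpha=\beta$ versus $\alpha<\beta$, with the boundary $-y_0$ absorbing the case where the interior critical point $\theta_2$ is infeasible. The substitution $s=-\theta$ is purely cosmetic relative to the paper's direct treatment of $J(\theta)$ in the variable $-\theta$, and your closing observation that compactness of $[-y_0,0]$ eliminates the ill-posed $\theta^*=\pm\infty$ case correctly explains the structural difference from (P1).
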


\begin{rem}
Notice that $\Po(A_1)$ is strictly positive but $\Po(A_2)$ may be zero. Moreover, if $\Po(A_2) = 0$, then $\Po(A_1)=1$.
This finding helps reduce the cases in Theorem \ref{thm_1case}.
We also observe that the optimal solution in Case (4) will change to $\theta^* = -\infty$ if the no short-selling constraint is dropped.
\end{rem}

\subsection{Discussions for $y_0=0$}
To obtain the conclusions in Theorem \ref{thm_1case}, we suppose $y_0 > 0$ in Assumption \ref{assumption_case1}.
If the investor does not hold any risky asset at time $0$ ($y_0=0$), we can remove the constraint of no short-selling and still obtain explicit solutions.
Notice that $B=(1+r)x_0$ when $y_0=0$, which is the most common choice for the reference point and is used by
\cite{BG10}, \cite{HZ11}, \cite{PS12}, and many others.

The solution to sub-problem (P1) is exactly the same as in Theorem \ref{thm_1case_positive}. However, the solution to sub-problem (P2) here is different from the results in Theorem \ref{thm_1case_negative}.
Given $y_0=0$, we have $y_0 + \theta \le 0$ for all $\theta \le 0$ (recall $y_0 + \theta \ge 0$ in Subsection \ref{subsection_1case_p2}).

Define $Z_3:=1+R - (1-\lambda)(1+r)$. Then we have
\[ D=W(\theta) - B = Z_3 \cdot \theta \quad \text{for all } \theta \le 0.\]
Define the set of losses $A_3$ by
$A_3:=\{ Z_3 > 0\} = \{1+R > (1-\lambda)(1+r) \}$.
The non-arbitrage condition \eqref{non_arbitrage} implies $\Po(A_3) >0$, while $\Po(A_2)=0$ is possible in Subsection \ref{subsection_1case_p2}.

By replacing $Z_2$ by $Z_3$, $A_2$ by $A_3$ and removing Case (4a) in Theorem \ref{thm_1case_positive}, we solve (P2) in the case of $y_0=0$.
Then a similar theorem as Theorem \ref{thm_1case} can be obtained easily for the case of $y_0 = 0$ without any constraint on investment strategies.

To study the connection between those two cases ($y_0>0$ and $y_0=0$), we modify the notations for $J(\theta)$.
For initial position $(x_0, y_0)$ with $y_0>0$, we use $\bar{J}(\theta; x_0, y_0)$ to replace $J(\theta)$.
If we fix $x_0$ and $\theta$ and treat $\bar{J}$ as a function of $y_0$, then $\bar{J}$ becomes a continuous function in argument $y_0$.
We then extend the definition of $\bar{J}$ by
\[ \bar{J}(\theta; x_0, 0) =\lim_{y_0 \to 0} \bar{J}(\theta; x_0, y_0) \quad \text{for all } x_0 \in \R, \, \theta \ge -y_0.\]
Denote the optimal strategy by $\bar{\theta}^*$ when $y_0>0$ and the no short-selling constraint is imposed, i.e.,
\[ \bar{\theta}^*(x_0, y_0) = \argmax_{\theta \ge -y_0} \bar{J}(\theta; x_0, y_0).\]
Despite the notation difference, we clearly have $\bar{\theta}^* = \theta^*$ given by Theorem \ref{thm_1case}.
If the investor does not hold any risky asset at time $0$ (i.e., $y_0 = 0$),
we use $\tilde{J}(\theta; x_0)$ to replace $J(\theta)$, where $\theta \in \R$.
Under those new notations,
\[ \tilde{J}(\theta; x_0) = \bar{J}(\theta; x_0, 0) \quad \text{for all } x_0 \in \R, \, \theta \ge -y_0.\]
If we extend the definition $\bar{J}$ from $\theta \ge -y_0$ to $\theta \in \R$, then the above equality holds on the entire real line for $\theta$.
Denote the optimal strategy by $\tilde{\theta}^*$ when $y_0=0$ and the no short-selling constraint is \emph{not} imposed, i.e.,
\[ \tilde{\theta}^*(y_0) = \argmax_{\theta \in \R} \tilde{J}(\theta; x_0).\]
We have the following proposition regarding the optimal prospect of those two cases.

\begin{prop}
Given $x_0 \in \R$ and $y_0 >0$, denote $x_{0,1} := x_0 +y_0 $ and $x_{0,2} := x_0 + (1-\lambda)y_0$.
The inequality
\[ \bar{J}(\bar{\theta}^*(x_0, y_0); x_0, y_0) \le \tilde{J}(\tilde{\theta}^*(x_{0,1}); x_{0,1}) \]
holds if neither of the two conditions in Case (8) of Theorem \ref{thm_1case} is satisfied.

The inequality
\[ \bar{J}(\bar{\theta}^*(x_0, y_0); x_0, y_0) \ge \tilde{J}(\tilde{\theta}^*(x_{0,2}); x_{0,2})\]
holds if $ 0 \le \tilde{\theta}^*(x_{0,2}) < \infty$.
\end{prop}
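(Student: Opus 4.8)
The plan is to compare the three problems by dominating terminal wealth pathwise and then invoking monotonicity of the prospect functional $V_D$ under almost-sure (hence first-order stochastic) ordering. First I would record each payoff. For the original investor with $(x_0,y_0)$ and benchmark $B=W(0)$ the feasible set is $\theta\ge-y_0$, and as already derived $D=W(\theta)-B$ equals $Z_1\theta$ on the buy side $\theta\ge0$ and $Z_2\theta$ on the sell side $-y_0\le\theta<0$. For the two liquidation benchmarks I would write the terminal wealth of an investor who holds only cash $x_{0,1}=x_0+y_0$ or $x_{0,2}=x_0+(1-\lambda)y_0$ and measure it against the \emph{same} reference $B$; a direct computation gives, on the corresponding buy sides, $D_1(\theta')=Z_1(\theta'-y_0)$ for the $x_{0,1}$-investor and $D_2(\theta')=Z_1\theta'-y_0Z_2$ for the $x_{0,2}$-investor. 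The one algebraic fact driving the argument is the almost-sure chain $Z_1\le Z_2\le Z_3$, which follows from $Z_2-Z_1=\lambda(1+r)\ge0$ and $Z_3-Z_2=\lambda(1+R)\ge0$.

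The engine of both halves is the monotonicity of $V_D$: if $D'\ge D$ almost surely then $V_D(D')\ge V_D(D)$. This is immediate from \eqref{CPT_def2}, because $D'\ge D$ gives $S_{D'}\ge S_D$ and $F_{D'}\le F_D$ pointwise while $w_\pm$ and $u_\pm$ are increasing. For the first inequality I would dominate every feasible original strategy by a buy strategy of the $x_{0,1}$-investor. Given $\theta\ge-y_0$, set $\theta'=\theta+y_0\ge0$; then $D_1(\theta')=Z_1\theta$, which coincides with $D=Z_1\theta$ when $\theta\ge0$ and satisfies $D_1(\theta')=Z_1\theta\ge Z_2\theta=D$ when $-y_0\le\theta<0$, since $Z_1\le Z_2$ and $\theta<0$. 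Monotonicity then yields $\tilde J(\theta';x_{0,1})\ge\bar J(\theta;x_0,y_0)$ for every admissible $\theta$, and taking suprema gives $\bar J(\bar\theta^*;x_0,y_0)\le\tilde J(\tilde\theta^*(x_{0,1});x_{0,1})$. The exclusion of Case (8) serves only to keep both optimal values finite, so that the bound is genuine rather than a vacuous $+\infty\le+\infty$.

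For the second inequality I would reverse the matching, starting from the optimizer $\tilde\theta^*(x_{0,2})$. The hypothesis $0\le\tilde\theta^*(x_{0,2})<\infty$ makes this a buy strategy, so it suffices to dominate buy strategies $\theta'\ge0$ of the $x_{0,2}$-investor by admissible original strategies. To $\theta'\ge0$ I associate $\theta=\theta'-y_0\ge-y_0$. If $\theta\ge0$ then $D=Z_1\theta=Z_1\theta'-y_0Z_1\ge Z_1\theta'-y_0Z_2=D_2(\theta')$ by $Z_1\le Z_2$; if $-y_0\le\theta<0$ then $D=Z_2\theta=Z_2\theta'-y_0Z_2\ge Z_1\theta'-y_0Z_2=D_2(\theta')$ since $(Z_2-Z_1)\theta'\ge0$. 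Either way $\bar J(\theta;x_0,y_0)\ge\tilde J(\theta';x_{0,2})$, and evaluating at $\theta'=\tilde\theta^*(x_{0,2})$ yields $\bar J(\bar\theta^*;x_0,y_0)\ge\tilde J(\tilde\theta^*(x_{0,2});x_{0,2})$.

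I expect the main obstacle to be the reference-point bookkeeping across the three problems. Because the prospect value sees wealth only through $D=W-B$, one has to be scrupulous that the two liquidation problems are scored against the benchmark inherited from $(x_0,y_0)$, so that the sale proceeds enter exactly as the shifts $y_0Z_1$ and $y_0Z_2$ that let the ordering $Z_1\le Z_2\le Z_3$ take effect. A secondary subtlety is why the second inequality needs $\tilde\theta^*(x_{0,2})\ge0$: the matching $\theta=\theta'-y_0$ is favourable only on the buy side, and if the auxiliary optimum were short the sell-side slope $Z_3$ rather than $Z_2$ would govern $D_2$, breaking the pathwise comparison. Finally, the first-order stochastic dominance monotonicity of $V_D$, although routine, should be recorded explicitly from \eqref{CPT_def2}, since it is used at every step.
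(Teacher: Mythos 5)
Your mechanism is exactly the paper's: the paper also matches each admissible $\theta$ of the $(x_0,y_0)$-investor with $\theta'=\theta+y_0$, observes that investor $\mathcal{I}_2$ (cash $x_{0,1}$) ends up holding $(x_0-\theta,\,y_0+\theta)$ while investor $\mathcal{I}_1$ holds the same portfolio if $\theta\ge0$ and the cash-poorer $(x_0-(1-\lambda)\theta,\,y_0+\theta)$ if $-y_0\le\theta\le0$, and concludes by dominance; your version is this argument with the first-order-stochastic-dominance monotonicity of $V_D$ made explicit rather than implicit.

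The difficulty is what the dominance is measured against, and there your proof and the paper's statement part ways. You score all three problems against the single benchmark $B$ of the $(x_0,y_0)$-investor; that is what produces $D_1(\theta')=Z_1(\theta'-y_0)$ and $D_2(\theta')=Z_1\theta'-y_0Z_2$ and makes your pathwise comparisons valid. In the paper, however, $\tilde J(\cdot\,;x_{0,1})$ is defined just before the proposition as $\bar J(\cdot\,;x_{0,1},0)$, i.e.\ the prospect functional of an investor with cash $x_{0,1}$ and no stock evaluated against \emph{her own} reference point $(1+r)x_{0,1}$; its deviations are $Z_1\theta'$ on the buy side and $Z_3\theta'$ on the sell side, with no memory of $y_0$. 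Under that definition your key step collapses: with $\theta'=\theta+y_0$ the auxiliary deviation minus the original one equals $y_0Z_1$ for $\theta\ge0$, which is negative on $A_1$, so no almost-sure ordering holds. The paper's own proof has exactly the same hole (terminal-wealth dominance does not give prospect dominance when the benchmarks differ by the random amount $y_0Z_1$), and under the literal definitions the first inequality is in fact false in general: take $r=0$, $\lambda=0.1$, $1+R\in\{0.5,\,1.5\}$ with probabilities $0.99$ and $0.01$, $w_\pm(p)=p$, $\alpha=0.5<\beta=0.88$, $k=2.25$ (neither condition of Case (8) of Theorem \ref{thm_1case} can hold since both require $\alpha=\beta$), and $y_0>|\theta_2|\approx 10^{4}$; then the $(x_0,y_0)$-investor's sell side, with deviation $Z_2\theta$, attains about $29.5$, while the cash investor's short side is governed by the pathwise-worse slope $Z_3\ge Z_2$ and attains only about $18.5$, the common buy side being negligible in this bearish market. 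So your common-benchmark reading is the only one under which the proposition and the matching argument are sound, but it is a re-interpretation of $\tilde J$, not the statement the paper's notation defines; this should be said openly rather than treated as bookkeeping. Note finally that under the paper's literal reading the second inequality needs no matching at all: $\bar J(\cdot\,;x_0,y_0)$ and $\tilde J(\cdot\,;x_{0,2})$ coincide on $\theta\ge0$ (both deviations equal $Z_1\theta$), so any $\tilde\theta^*(x_{0,2})\in[0,\infty)$ is itself feasible for the $(x_0,y_0)$-investor with the same prospect value.
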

\begin{proof}
We first observe that $\bar{J}(\bar{\theta}^*(x_0, y_0); x_0, y_0) < +\infty$ if and only if  neither of the two conditions in Case (8) of Theorem \ref{thm_1case} is satisfied.
Consider investor $\mathcal{I}_1$ with initial portfolio $(x_0,y_0)$ and investor $\mathcal{I}_2$ with $x_{0,1}$ in the riskless asset.
No matter what strategy $\theta$ investor $\mathcal{I}_1$ chooses, investor $\mathcal{I}_2$ can always choose $\theta' = \theta + y_0$ to outperform investor $\mathcal{I}_1$.
To see this fact, we discuss two scenarios:
\begin{itemize}
  \item $\theta \ge 0$

  At time $0$ right after the investment decision, the portfolios of both investor $\mathcal{I}_1$ and investor $\mathcal{I}_2$ become $(x_0 - \theta, y_0 + \theta)$.

  \item $-y_0 \le \theta \le 0$

  At time $0$ right after the investment decision, the portfolio of investor $\mathcal{I}_1$ is now $(x_0 - (1-\lambda)\theta, y_0 + \theta)$ while the portfolio of investor $\mathcal{I}_2$ is still $(x_0 - \theta, y_0 + \theta)$. Notice $x_0 - \theta \ge x_0 - (1-\lambda)\theta$.
\end{itemize}
The above fact obviates the first inequality.

If $ 0 \le \tilde{\theta}^*(x_{0,2}) < \infty$, then $\tilde{J}(\tilde{\theta}^*(x_{0,2}); x_{0,2})<+\infty$ and $\theta' = \tilde{\theta}^*(x_{0,2}) - y_0 \ge -y_0$ is a feasible investment strategy for an investor with initial portfolio $(x_0,y_0)$. Following a similar argument as in the proof of the first inequality leads to
$\bar{J}(\theta'; x_0, y_0) \ge \tilde{J}(\tilde{\theta}^*(x_{0,2}); x_{0,2})$, implying the second inequality.
\qed
\end{proof}

\subsection{Discussions for $\lambda=0$}
In the model setup, we allow $\lambda \ge 0$. The analysis and results in this section so far are more interesting when $\lambda>0$, since similar results are obtained in a frictionless market (corresponding to $\lambda=0$), see, e.g., \cite{BG10}, \cite{HZ11}, and \cite{PS12}.
In the rest of this section, we conduct some comparisons for two cases when $\lambda>0$ and $\lambda=0$.
\begin{enumerate}
  \item If $\lambda>0$, $Z_1=(1-\lambda)(1+R) - (1+r) < Z_2 = (1-\lambda)(R-r)$.
  If $\lambda = 0$, $Z_1 = Z_2 = R-r$.

  \item Recall $A_1 = \{Z_1<0\}$ and $A_2 = \{Z_2>0\}$.
  Given non-arbitrage condition \eqref{non_arbitrage}, $0 < \Po(A_1) \le 1$ and $ 0 \le \Po(A_2) \le 1$ if $\lambda>0$; and
  $0 < \Po(A_1) < 1$ and $ 0 < \Po(A_2) < 1$ if $\lambda=0$.

  \item If $\lambda = 0$, then the results of Theorem \ref{thm_1case} are reduced to those in a frictionless market (see, for instance, \cite[Theorem 3.1]{BG10} and \cite[Theorem 3]{HZ11}).
  In addition, as a result of the previous comparison, several Cases in Theorem \ref{thm_1case} will never happen, these cases include (1a)-(1c), (2a), (3a), (4a)-(4c), (5a), (6a), and (8a).

  \item Using the CPT definition \eqref{CPT_def2}, we rewrite $g_i(Z_j)$, $i,j=1,2$, defined in \eqref{def_g1_l1} and \eqref{def_g2_l2}, as
\begin{align*}
g_1(Z_1) &= \int_0^{\infty} w_+(S_{Z_1}(z)) du_+(z), & l_1(Z_1)&=\frac{1}{k} \int_0^{\infty} w_-(F_{Z_1}(-z)) du_-(z), \\
g_2(Z_2) &= \int_0^{\infty} w_+(F_{Z_2}(-z)) du_+(z), & l_2(Z_2)&=\frac{1}{k} \int_0^{\infty} w_-(S_{Z_2}(z)) du_-(z).
\end{align*}

If $\lambda>0$, we have $Z_2 > Z_1$, which implies $F_{Z_2}(z) \le F_{Z_1}(z)$ for all $z$ (strict inequality holds for some $z$).
Furthermore, if $Z_2$ is symmetrically distributed around $0$ (equivalently, $R$ is symmetrically distributed around $r$), we have, $\forall \, z>0$
\begin{align*}
 F_{Z_2}(-z) &= 1 - F_{Z_2}(z) \ge 1 - F_{Z_1}(z) = S_{Z_1}(z), \\
 F_{Z_1}(-z) &= 1 - S_{Z_1}(-z) \ge 1 - S_{Z_2}(-z)= S_{Z_2}(z).
\end{align*}
Therefore $g_2(Z_2) > g_1(Z_1)$ and $l_2(Z_2) < l_1(Z_1)$.
Consequently, $K_2(Z_2) > K_1(Z_1)$ holds, and then $K_M=K_2(Z_2)$.

If $\lambda =0$, and $Z_2$ is symmetrically distributed around $0$, we have $K_1(Z_1) = K_2(Z_2)$.
\end{enumerate}

\section{Explicit Solution When $R$ Has a Binomial Distribution}
\label{sec_case2}
In this section, we solve Problem \ref{main_problem} explicitly when the return of the risky asset $R$ has a binomial distribution, specified by
\begin{equation} \label{binomial_model}
1 + R =\begin{cases}
u, & \text{ with probability } 1-p \\
d,  & \text{ with probability } p
\end{cases},
\end{equation}
where $u>d>0$ and $0<p<1$.

The non-arbitrage condition \eqref{non_arbitrage} in this model reads as
\[ u > (1-\lambda) (1+r) > (1-\lambda)^2 d .\]

Given a payoff $\xi \in \F_T$, assume
\begin{equation*}
\xi =\begin{cases}
\xi_u, & \text{ when } 1+ R= u \\
\xi_d,  & \text{ when } 1+ R= d
\end{cases}.
\end{equation*}
In what follows, we may denote $\xi=(\xi_u, \xi_d)$ in the above sense.
In the market modeled by \eqref{binomial_model}, assume we can replicate $\xi$ by strategy $\theta_\xi$ and initial investment $x_\xi$.

\begin{itemize}
\item If $\xi_u \ge \xi_d$, then we obtain\footnote{In this case, the replication strategy involves long the risky asset.
$\theta_\xi$ and $x_\xi$ are solved from
$
(1+r) \cdot (x_\xi - \theta_\xi) + (1-\lambda)u \cdot \theta_\xi = \xi_u \text{ and }
(1+r) \cdot (x_\xi - \theta_\xi) + (1-\lambda)d \cdot  \theta_\xi = \xi_d.
$}
\begin{align}
\theta_\xi &= \frac{\xi_u - \xi_d}{(1-\lambda)(u-d)}, \label{replication_strategy1}\\
x_\xi &= \frac{1}{1+r} \left(p^b_u \cdot \xi_u + p^b_d \cdot \xi_d \right),\label{replication_cost1}
\end{align}
where $p^b_u$ and $p^b_d$ are defined by
\begin{equation*}
p^b_u:=\frac{(1+r) - (1-\lambda)d}{(1-\lambda)(u-d)}, \;\; \text{and} \;\;
p^b_d:=\frac{(1-\lambda)u - (1+r)}{(1-\lambda)(u-d)}.
\end{equation*}

\item If $\xi_u < \xi_d$, then we obtain\footnote{In this case, the replication strategy involves short the risky asset.
$\theta_\xi$ and $x_\xi$ are solved from
$
(1+r) \cdot (x_\xi - (1-\lambda)\theta_\xi) + u \cdot \theta_\xi = \xi_u \text{ and }
(1+r) \cdot (x_\xi - (1-\lambda)\theta_\xi) + d \cdot  \theta_\xi = \xi_d.
$}
\begin{align}
\theta_\xi &= \frac{\xi_u - \xi_d}{u-d}, \label{replication_strategy2}\\
x_\xi &= \frac{1}{1+r} \left(p^s_u \cdot \xi_u + p^s_d \cdot \xi_d \right),\label{replication_cost2}
\end{align}
where $p^s_u$ and $p^s_d$ are defined by
\begin{equation*}
p^s_u:=\frac{(1-\lambda)(1+r) - d}{u-d}, \;\;\text{and} \;\;
p^s_d:=\frac{u- (1-\lambda)(1+r)}{u-d}.
\end{equation*}
\end{itemize}

\begin{rem}
If $\xi_u \ge \xi_d$ (or $\xi_u < \xi_d$), the replication strategy involves buying (or selling) the risky asset (since $\theta_\xi \ge 0$ in \eqref{replication_strategy1} and $\theta_\xi<0$ in \eqref{replication_strategy2}).

Notice that $p^b_u+ p^b_d=1$, $p^s_u+ p^s_d=1$ and $p^b_u, \, p^s_d>0$, but $p^b_d$ and $p^s_u$ may be negative, so $(p^b_u, p^b_d)$ and $(p^s_u, p^s_d)$ are not necessarily risk-neutral probability measures. However, if $\lambda=0$, we have $p^b_u=p^s_u$, $p^b_d=p^s_d$, and $(p^b_u, p^b_d)$ is indeed the unique risk-neutral probability measure.

\end{rem}

To solve Problem \ref{main_problem}, we claim that the assumptions below hold in the rest of this section.
\begin{asu} \label{assumption_case2}
\quad

\begin{enumerate}
\item The investor begins with initial portfolio $(x_0, 0)$, i.e., the investor does not hold any risky asset at the beginning, $y_0=0$.

\item The risky return $R$ in the market is modeled by \eqref{binomial_model}.

\item The utility function is a piece-wise exponential utility given by \eqref{exp_utility} with $\eta_+= \eta_- = \eta>0$.
\end{enumerate}

\end{asu}

\begin{rem}
The reference point $B$, given by \eqref{reference_point}, becomes
\[ B = (1+r)x_0 \;\;\text{when}  \;\; y_0=0.\]
Recall \eqref{replication_cost1} and \eqref{replication_cost2}, $B$ can be also rewritten as
\[ B = p^b_u \cdot \xi_u + p^b_d \cdot \xi_d = p^s_u \cdot \xi_u + p^s_d \cdot \xi_d .\]

Since $y_0=0$, we set $x_\xi = x_0$ and only consider investment strategies with initial wealth $x_0$.
Note that if $\theta_\xi$ is a replication strategy, given by \eqref{replication_strategy1} or \eqref{replication_strategy2}, $W(\theta_\xi) = \xi$ and $J(\theta_\xi) = V(\xi)$.

In Assumption \ref{assumption_case2}, we consider a piece-wise exponential utility function.
A major difference between a piece-wise power utility
and a piece-wise exponential utility  is that prospect utility under the exponential utility is always finite, due to the fact that $0 \le u_\pm(x) \le \zeta$ for all $x \ge 0$. Hence Assumption \ref{assumption_finiteness_CPT} is always satisfied under the exponential utility.
\end{rem}

\subsection{Main Results}
\label{subsec_2case_main}
We separate the prime problem $\sup_{\theta \in \R} J(\theta)$ into two sub problems:
\[ \text{(P3)} \; \sup_{\theta \ge 0} J(\theta)   \text{ and }  \text{(P4)}
\;  \sup_{\theta < 0} J(\theta) .\]

The results in the previous Remark motivate us to consider two sets of random payoffs:
\begin{align*}
\Xi^b:=\{ \xi=(\xi_u, \xi_d) \in \F_T: \xi_u \ge \xi_d, \; p^b_u \cdot(\xi_u - B) + p^b_d \cdot (\xi_d - B)=0 \}, \\
\Xi^s:=\{ \xi=(\xi_u, \xi_d) \in \F_T: \xi_u < \xi_d, \; p^s_u \cdot(\xi_u - B) + p^s_d \cdot (\xi_d - B)=0 \},
\end{align*}
and two sub problems:
\[ \text{(P3')} \; \sup_{\xi \in \Xi^b} V(\xi)   \text{ and }  \text{(P4')}
\;  \sup_{\xi \in \Xi^s} V(\xi) .\]
Notice that $\xi \in \Xi^b \Leftrightarrow \theta_\xi \ge 0$ and $\xi \in \Xi^s \Leftrightarrow \theta_\xi < 0$.\footnote{The results explain the superscript notations in $\Xi^b$ and $\Xi^s$ (``$b$" stands for ``buy" and ``$s$" stands for ``sell").}
In consequence, (P3) is equivalent to (P3') and (P4) is equivalent to (P4').
Once we have solved (P3') (or (P4')), we can easily obtain the solution to (P3) (or (P4)) by using \eqref{replication_strategy1} (or  \eqref{replication_strategy2}).
Finally, Problem \ref{main_problem} is solved by a comparison of the maximal value of (P3) and (P4).

%We solve (P3) and (P4) first and then compare the maximal value of these two problems; the greater one is the associated prospect utility of an optimal investment strategy to Problem \ref{main_problem}.

To present the main theorems, we make the following definitions:
\begin{align}
  \bar{\zeta}_1 &:= \frac{w_+(1-p)}{ w_-(p)} &\,  \bar{\zeta}_2 &:=  \frac{p^b_d \cdot w_+(1-p)}{p^b_u \cdot w_-(p)}  , \label{def_bar_zeta}\\
  \underline{\zeta}_1 &:=  \frac{w_+(p)}{ w_-(1-p)}   &\,  \underline{\zeta}_2 &:= \frac{p^s_u \cdot w_+(p)}{p^s_d \cdot w_-(1-p)}, \label{def_underline_zeta}
\end{align}
and
\begin{align}
  \theta_3 &:= \frac{1}{\eta ( (1-\lambda)(u+d) - 2(1+r))} \ln \left( \frac{\ \bar{\zeta}_2 \ }{\zeta}\right), \label{def_theta3} \\
  \theta_4 &:= -\frac{1}{\eta \big( 2(1-\lambda)(1+r) - (u+d)   \big)} \ln \left( \frac{\ \underline{\zeta}_2 \ }{\ \zeta \ } \right). \label{def_theta4}
\end{align}

The solutions to (P3) and (P4) are summarized in Theorem \ref{thm_exp_case1} and Theorem \ref{thm_exp_case2}, respectively.
We provide detailed proofs for these two theorems in Subsections \ref{subsec_2case_p3} and \ref{subsec_2case_p4}.

\begin{thm}
\label{thm_exp_case1}
If Assumption \ref{assumption_case2} holds, we obtain explicit solutions to (P3) from the following cases.
\begin{enumerate}
  \item The optimal solution is $\theta^*=0$ and $J (\theta^*)=0$ if one of the following conditions holds:
  \vspace{1ex}
  \begin{enumerate}
    \item $p^b_d \le 0$;
    \vspace{1ex}
    \item $p^b_d = p^b_u > 0$ and $\zeta > \bar{\zeta}_1$;
    \vspace{1ex}
    \item $0< p^b_d < p^b_u $ and $\zeta \ge \bar{\zeta}_1$;
    \vspace{1ex}
    \item $p^b_d > p^b_u >0 $ and $\zeta \ge \bar{\zeta}_2$.
    \vspace{1ex}
  \end{enumerate}

  \item Any $\theta^* \in [0, +\infty)$ is an optimal solution and $J (\theta^*)=0$ if

  $p^b_d = p^b_u > 0$ and $\zeta = \bar{\zeta}_1$.
  \vspace{1ex}

  \item The optimal solution is $\theta^*= + \infty$ and $J (\theta^*)=  w_+(1-p) - \zeta \cdot w_-(p)>0$ if one of the following conditions holds:
  \vspace{1ex}
  \begin{enumerate}
    \item $p^b_d = p^b_u > 0$ and $\zeta < \bar{\zeta}_1$;
    \vspace{1ex}
    \item $0< p^b_d < p^b_u $ and $\zeta <\bar{\zeta}_1$.
    \vspace{1ex}
  \end{enumerate}

  \item The optimal solution is $\theta^* = \theta_3$ and $J (\theta^*)= J(\theta_3)>0$ if

  $p^b_d > p^b_u >0 $ and $\zeta < \bar{\zeta}_2$.
\end{enumerate}
\end{thm}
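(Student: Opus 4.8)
The plan is to reduce (P3) to the maximization of an explicit scalar function of $\theta$ and then read off each case by elementary calculus. First I would compute the displacement from the reference point. Since $y_0=0$ and $\theta\ge 0$, liquidation sells the entire long position, so $D=W(\theta)-B=Z_1\theta$ with $Z_1=(1-\lambda)(1+R)-(1+r)$; in the binomial model $Z_1$ takes the value $a_1:=(1-\lambda)u-(1+r)$ in the up state (probability $1-p$) and $-a_2$ with $a_2:=(1+r)-(1-\lambda)d$ in the down state (probability $p$). The sign of $a_1$ equals that of $p^b_d$, the sign of $a_2$ equals that of $p^b_u$, and $a_1/a_2=p^b_d/p^b_u$. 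The degenerate situation $p^b_d\le 0$ (Case (1a)) is then immediate: $a_1\le 0$ means buying produces no strict gain in either state, so $J(\theta)\le 0=J(0)$ for all $\theta\ge 0$ and $\theta^*=0$. In every remaining case $p^b_d>0$, while $p^b_u>0$ always holds under non-arbitrage (the inequality $(1-\lambda)(1+r)>(1-\lambda)^2 d$ is exactly $a_2>0$); hence the up state is the unique gain and the down state the unique loss.

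For this two-point payoff the CPT integrals in \eqref{CPT_def2} collapse: the single gain $a_1\theta$ carries decision weight $w_+(1-p)$ and the single loss $a_2\theta$ carries weight $w_-(p)$, so that
\[ J(\theta)=w_+(1-p)\bigl(1-e^{-\eta a_1\theta}\bigr)-\zeta\, w_-(p)\bigl(1-e^{-\eta a_2\theta}\bigr),\qquad \theta\ge 0. \]
I would then record the two reference values $J(0)=0$ and $\lim_{\theta\to\infty}J(\theta)=w_+(1-p)-\zeta\, w_-(p)$, the latter being positive exactly when $\zeta<\bar{\zeta}_1$. Differentiating gives
\[ J'(\theta)=\eta\Bigl[w_+(1-p)\,a_1\, e^{-\eta a_1\theta}-\zeta\, w_-(p)\,a_2\, e^{-\eta a_2\theta}\Bigr], \]
whose unique root is $\theta_3=\tfrac{1}{\eta(a_1-a_2)}\ln(\bar{\zeta}_2/\zeta)$ from \eqref{def_theta3}, using $a_1-a_2=(1-\lambda)(u+d)-2(1+r)$ and $\bar{\zeta}_2=\bar{\zeta}_1\,p^b_d/p^b_u$; the sign of $J'(0)$ is that of $\bar{\zeta}_2-\zeta$.

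The decisive structural point is the sign of $a_1-a_2$, equivalently of $p^b_d-p^b_u$, since it flips the nature of $\theta_3$. Rewriting $J'(\theta)=\eta e^{-\eta a_1\theta}\bigl[w_+(1-p)a_1-\zeta w_-(p)a_2\, e^{\eta(a_1-a_2)\theta}\bigr]$, I see that when $p^b_d>p^b_u$ the bracket is strictly decreasing, so $\theta_3$ (which lies in $(0,\infty)$ precisely when $\zeta<\bar{\zeta}_2$) is a global maximizer with $J(\theta_3)>J(0)=0$, yielding Case (4), whereas $\zeta\ge\bar{\zeta}_2$ makes $J$ non-increasing and gives Case (1d). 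When $p^b_d<p^b_u$ the bracket is strictly increasing, so any interior critical point is a local minimum and the maximum is attained at an endpoint: comparing $J(0)=0$ with the limit $w_+(1-p)-\zeta w_-(p)$ gives $\theta^*=+\infty$ for $\zeta<\bar{\zeta}_1$ (Case (3b)) and $\theta^*=0$ for $\zeta\ge\bar{\zeta}_1$ (Case (1c)). The borderline $p^b_d=p^b_u=\tfrac12$ collapses $J$ to $[w_+(1-p)-\zeta w_-(p)](1-e^{-\eta a_1\theta})$, which is monotone and, since $\bar{\zeta}_1=\bar{\zeta}_2$ here, splits into Cases (1b), (2), and (3a) according to whether $\zeta\gtrless\bar{\zeta}_1$.

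The main obstacle is exactly this sign flip: in the regime $p^b_d<p^b_u$ the unique stationary point is a \emph{minimum}, so one cannot locate the optimum from first-order conditions alone and must instead compare the two boundary values $0$ and $w_+(1-p)-\zeta w_-(p)$. Keeping the two threshold comparisons ($\zeta$ versus $\bar{\zeta}_1$ and versus $\bar{\zeta}_2$) consistent across the three regimes $p^b_d\lessgtr p^b_u$, and verifying the strict positivity $J(\theta_3)>0$ claimed in Case (4), is where the care is needed.
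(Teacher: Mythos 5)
Your proposal is correct, and all the computational claims check out: $a_1=p^b_d(1-\lambda)(u-d)$ and $a_2=p^b_u(1-\lambda)(u-d)$, so the sign identifications and $a_1/a_2=p^b_d/p^b_u$ hold; the CPT integrals do collapse to the two terms $w_+(1-p)u_+(a_1\theta)$ and $w_-(p)u_-(a_2\theta)$ for a binary gamble; the root of $J'$ is exactly $\theta_3$ from \eqref{def_theta3}; and your structural point — that for $p^b_d<p^b_u$ the unique stationary point is a \emph{minimum}, forcing the endpoint comparison of $J(0)=0$ against $\lim_{\theta\to\infty}J(\theta)=w_+(1-p)-\zeta\,w_-(p)$ — is precisely the crux, handled correctly. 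The only difference from the paper is parametrization: the paper solves the equivalent payoff-space problem (P3'), writing $B-\xi_d=(p^b_u/p^b_d)(\xi_u-B)$ and maximizing $L^b(\xi_u)=w_+(1-p)\,u_+(\xi_u-B)-w_-(p)\,u_-\bigl((p^b_u/p^b_d)(\xi_u-B)\bigr)$ over $\xi_u\ge B$, then recovering $\theta^*$ via \eqref{replication_strategy1}, whereas you work directly in strategy space with $J(\theta)$. These are related by the strictly increasing linear substitution $\xi_u-B=a_1\theta$, so the critical-point structure, the endpoint comparisons, and the thresholds $\bar{\zeta}_1$, $\bar{\zeta}_2$ are literally identical in the two treatments. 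Your version is more self-contained (no replication formalism, no sets $\Xi^b$, $\Xi^s$) and displays the two exponential rates $a_1$, $a_2$ symmetrically; the paper's version makes the pricing weights $p^b_u$, $p^b_d$ and their near-risk-neutral interpretation explicit, which it then reuses verbatim for (P4') and in the $\lambda=0$ comparison.
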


\begin{thm}
\label{thm_exp_case2}
If Assumption \ref{assumption_case2} holds, we obtain explicit solutions to (P4) from the following cases.
\begin{enumerate}
  \item The optimal solution is $\theta^*=0$ and $J (\theta^*)=0$ if one of the following conditions holds:
  \vspace{1ex}
  \begin{enumerate}
    \item $p^s_u \le 0$;
    \vspace{1ex}
    \item $p^s_u = p^s_d > 0$ and $\zeta > \underline{\zeta}_1$;
    \vspace{1ex}
    \item $0<p^s_u < p^s_d $ and $\zeta \ge \underline{\zeta}_1$;
    \vspace{1ex}
    \item $p^s_u > p^s_d >0 $ and $\zeta \ge \underline{\zeta}_2$.
    \vspace{1ex}
  \end{enumerate}

  \item Any $\theta^* \in (-\infty, 0)$ is an optimal solution and $J (\theta^*)=0$ if

  $p^s_u = p^s_d > 0$ and $\zeta = \underline{\zeta}_1$.
  \vspace{1ex}

  \item The optimal solution is $\theta^*= - \infty$ and $J (\theta^*)=  w_+(p) - \zeta \cdot w_-(1-p)>0$ if
  one of the following conditions holds:
  \vspace{1ex}
  \begin{enumerate}
    \item $p^s_u = p^s_d > 0$ and $\zeta < \underline{\zeta}_1$;
    \vspace{1ex}
    \item $0<p^s_u < p^s_d $ and $\zeta < \underline{\zeta}_1$.
    \vspace{1ex}
  \end{enumerate}

  \item The optimal solution is $\theta^* = \theta_4$ and $J (\theta^*)= J(\theta_4)>0$ if

  $p^s_u > p^s_d >0 $ and $\zeta < \underline{\zeta}_2$.
\end{enumerate}
\end{thm}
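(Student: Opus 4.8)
The plan is to reduce (P4) to a one-dimensional optimization over short positions and then read off each case from elementary calculus. Since $y_0=0$ and $\theta<0$, the investor holds a pure short position, so at liquidation $W(\theta)-B=\theta\cdot Z_3$ with $Z_3=1+R-(1-\lambda)(1+r)$; equivalently, working with the replicating payoff $\xi=(\xi_u,\xi_d)\in\Xi^s$ as in (P4'), one has $\xi_u<\xi_d$ together with the budget constraint $p^s_u(\xi_u-B)+p^s_d(\xi_d-B)=0$. The first step is to extract the sign structure of the two outcomes: because $p^s_d>0$ by non-arbitrage, the constraint forces $p^s_d(\xi_d-B)=-p^s_u(\xi_u-B)$, so the state $u$ (probability $1-p$) is always a loss, while the state $d$ (probability $p$) is a gain precisely when $p^s_u>0$. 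This immediately disposes of Case (1a): if $p^s_u\le 0$ both states are losses, $V(\xi)\le 0$ for every admissible $\xi$, and hence $\theta^*=0$.

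For the nondegenerate regime $p^s_u>0$ I would write the prospect value out explicitly. Since the payoff takes only two values, the distortion assigns mass $w_+(p)$ to the single gain at state $d$ and mass $w_-(1-p)$ to the single loss at state $u$; with $s:=-\theta>0$ and the exponential utility \eqref{exp_utility} (where $\eta_+=\eta_-=\eta$), this gives
\[
J(\theta)=w_+(p)\bigl(1-e^{-\eta g}\bigr)-\zeta\,w_-(1-p)\bigl(1-e^{-\eta \ell}\bigr),\qquad g=s\,p^s_u(u-d),\quad \ell=s\,p^s_d(u-d).
\]
Differentiating in $s$ and setting $J'(s)=0$ yields $\underline{\zeta}_2/\zeta=\exp\bigl(\eta s(u-d)(p^s_u-p^s_d)\bigr)$; since $(u-d)(p^s_u-p^s_d)=2(1-\lambda)(1+r)-(u+d)$, solving for $s$ recovers exactly $\theta_4$ from \eqref{def_theta4}. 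I would also record the boundary values $J(0)=0$ and $\lim_{s\to\infty}J(s)=w_+(p)-\zeta\,w_-(1-p)$, the latter being positive iff $\zeta<\underline{\zeta}_1$.

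The heart of the argument is a case split on the sign of $p^s_u-p^s_d$, equivalently on whether $2(1-\lambda)(1+r)$ is above, equal to, or below $u+d$. When $p^s_u=p^s_d$ the objective factors as $(1-e^{-\eta s p^s_u(u-d)})\,[w_+(p)-\zeta\,w_-(1-p)]$, a monotone multiple whose sign is governed by $\zeta$ versus $\underline{\zeta}_1=\underline{\zeta}_2$, producing Cases (1b), (2), (3a). When $0<p^s_u<p^s_d$ the loss exponential decays faster, so $J$ is eventually increasing with a unique interior minimum at $\theta_4$; comparing $J(0)=0$ with $\lim_{s\to\infty}J(s)$ shows the supremum sits at $-\infty$ exactly when $\zeta<\underline{\zeta}_1$ (Case (3b)) and at $0$ otherwise (Case (1c)). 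When $p^s_u>p^s_d$ the roles reverse: $J$ rises then falls with a unique interior maximum, which exists iff $\zeta<\underline{\zeta}_2$; that interior maximizer $\theta_4$ then beats both endpoints (Case (4)), whereas for $\zeta\ge\underline{\zeta}_2$ the objective is decreasing and $\theta^*=0$ (Case (1d)).

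The main obstacle I anticipate is bookkeeping rather than conceptual: correctly classifying each interior critical point as a maximum or a minimum, since this flips with the sign of $p^s_u-p^s_d$, and ensuring that the endpoint-versus-interior comparison selects the right threshold ($\underline{\zeta}_1$ in the regime where $J$ is increasing at infinity, $\underline{\zeta}_2$ in the concave regime). A cleaner route is to exploit the structural symmetry with (P3): the optimization in (P4') is of the identical form to that in (P3') after interchanging the two branches of the binomial tree, so that $p\leftrightarrow 1-p$, the buy-weights $p^b_u,p^b_d$ are replaced by the sell-weights $p^s_u,p^s_d$, and $\bar{\zeta}_i\leftrightarrow\underline{\zeta}_i$. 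Under this correspondence Theorem \ref{thm_exp_case1} maps onto Theorem \ref{thm_exp_case2}, and the entire case analysis of Subsection \ref{subsec_2case_p3} transfers, which lets me present the proof concisely by reduction to (P3) while spelling out only the gain/loss relabeling.
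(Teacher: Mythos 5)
Your proposal is correct and follows essentially the same route as the paper: reduce (P4) to a one-dimensional exponential optimization (the paper parametrizes by the payoff $\xi_d$ through (P4'), you by $s=-\theta$, which are linked by the linear replication formula $\xi_d-B=s\,p^s_u(u-d)$), identify the unique critical point $\theta_4$, and split on the sign of $p^s_u-p^s_d$ with endpoint comparisons against the thresholds $\underline{\zeta}_1$ and $\underline{\zeta}_2$. The paper also concludes by noting that the analysis is identical to that of (P3') under the gain/loss relabeling, so your proposed symmetry shortcut is exactly its strategy.
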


To simplify the citations from cases of the above two theorems, we introduce the notation [Th.4; 1] to denote Case (1) of Theorem \ref{thm_exp_case1}. Similar notations apply to Theorem \ref{thm_exp_case2} as well.

According to Theorem \ref{thm_exp_case1}, the maximal prospect of (P3) $J (\theta^*)$ takes three possible values:
$0$ in [Th.4; 1,2]; $w_+(1-p) - \zeta \cdot w_-(p)>0$ in [Th.4; 3]; and $J(\theta_3)>0$ in [Th.4; 4].
Regarding (P4), its maximal prospect $J (\theta^*)$ also takes three possible values:
$0$ in [Th.5; 1,2]; $w_+(p) - \zeta \cdot w_-(1-p)>0$ in [Th.5; 3]; and $J(\theta_4)>0$ in [Th.5; 4].
By comparing the two maximal prospect of (P3) and (P4), we find the maximal prospect and the optimal investment strategy to Problem \ref{main_problem}.
The results in Theorems \ref{thm_exp_case1} and \ref{thm_exp_case2} immediately lead to Theorem \ref{thm_exp} below.

\begin{thm}
\label{thm_exp}
If Assumption \ref{assumption_case2} holds, we obtain the optimal investment strategy $\theta^*$ to Problem \ref{main_problem} through the following cases.
\begin{enumerate}
\item $\theta^*=0$ if [Th.4; 1] and [Th.5; 1] hold simultaneously.
\vspace{1ex}

\item $\theta^* = \theta_3$ if [Th.4; 4] holds and one of the following satisfies as well:
\vspace{1ex}
\begin{enumerate}
  \item[(a)] [Th.5; 1, 2] holds;
  \vspace{1ex}

  \item[(b)] [Th.5; 3] holds and $J(\theta_3) \ge J(-\infty)$;
  \vspace{1ex}

  \item[(c)] [Th.5; 4] holds and $J(\theta_3) \ge J(\theta_4)$.
  \vspace{1ex}
\end{enumerate}

\item $\theta^* = \theta_4$ if [Th.5; 4] holds and one of the following satisfies as well:
\vspace{1ex}
\begin{enumerate}
  \item[(a)] [Th.4; 1, 2] holds;
  \vspace{1ex}

  \item[(b)] [Th.4; 3] holds and $J(\theta_4) \ge J(+\infty)$;
  \vspace{1ex}

  \item[(c)] [Th.4; 4] holds and $J(\theta_4) \ge J(\theta_3)$.
  \vspace{1ex}
\end{enumerate}

\item Any $\theta^* \in [0, +\infty)$ is an optimal investment strategy if [Th.4; 2] and [Th.5; 1] hold simultaneously.
\vspace{1ex}

\item Any $\theta^* \in (-\infty, 0]$ is an optimal investment strategy if [Th.4; 1] and [Th.5; 2] hold simultaneously.
\vspace{1ex}

\item Any $\theta^* \in (-\infty, +\infty)$ is an optimal investment strategy if [Th.4; 2] and [Th.5; 2] hold simultaneously.
\vspace{1ex}

\item $\theta^* = +\infty$ if [Th.4; 3] holds and one of the following satisfies as well:
\vspace{1ex}
\begin{enumerate}
  \item[(a)] [Th.5; 1, 2] holds;
  \vspace{1ex}

  \item[(b)] [Th.5; 3] holds and $J(+\infty) \ge J(-\infty)$;
  \vspace{1ex}

  \item[(c)] [Th.5; 4] holds and $J(+\infty) \ge J(\theta_4)$.
  \vspace{1ex}
\end{enumerate}

\item $\theta^* = -\infty$ if [Th.5; 3] holds and one of the following satisfies as well:
\vspace{1ex}
\begin{enumerate}
  \item[(a)] [Th.4; 1, 2] holds;
  \vspace{1ex}

  \item[(b)] [Th.4; 3] holds and $J(-\infty) \ge J(+\infty)$;
  \vspace{1ex}

  \item[(c)] [Th.4; 4] holds and $J(-\infty) \ge J(\theta_3)$.
\end{enumerate}
\end{enumerate}
\end{thm}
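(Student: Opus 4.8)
The plan is to reduce Problem~\ref{main_problem} to the two sub-problems already solved. Since $\R = [0,\infty)\cup(-\infty,0)$, we have
\[ \sup_{\theta\in\R} J(\theta) = \max\Big\{ \sup_{\theta\ge 0}J(\theta),\ \sup_{\theta<0}J(\theta)\Big\} = \max\{\text{(P3)},\text{(P4)}\}, \]
so the global maximizer of $J$ is the maximizer of whichever sub-problem attains the larger value, with ties resolved by taking the union of the two argmax sets. First I would record the elementary facts that anchor the comparison: by the choice of reference point \eqref{reference_point} we have $W(0)=B$, hence $D=0$ and $J(0)=0$; and since $J$ is continuous in $\theta$, $\lim_{\theta\to0^-}J(\theta)=0$. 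Consequently both $\sup(\text{P3})\ge0$ and $\sup(\text{P4})\ge0$ always hold, the value $0$ being available on the nonnegative axis at $\theta=0$.

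Next I would catalogue, via Theorems~\ref{thm_exp_case1} and~\ref{thm_exp_case2}, the only possible maximal values of the two sub-problems. By Theorem~\ref{thm_exp_case1}, $\sup(\text{P3})$ equals either $0$ (attained only at $\theta=0$ in [Th.4;1], or on all of $[0,\infty)$ in [Th.4;2]), or the limiting value $J(+\infty)=w_+(1-p)-\zeta\,w_-(p)>0$ in [Th.4;3], or the interior value $J(\theta_3)>0$ in [Th.4;4]. Symmetrically, Theorem~\ref{thm_exp_case2} shows $\sup(\text{P4})$ equals $0$ (in [Th.5;1,2]), $J(-\infty)=w_+(p)-\zeta\,w_-(1-p)>0$ in [Th.5;3], or $J(\theta_4)>0$ in [Th.5;4]. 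The theorem then follows from a finite comparison of these two short lists.

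I would organise the comparison by the winning sub-problem. When one sub-problem attains a strictly positive maximal value at least as large as the other's, its optimizer is globally optimal: if (P3) wins, $\theta^*=\theta_3$ (Case~2) or $\theta^*=+\infty$ (Case~7); if (P4) wins, $\theta^*=\theta_4$ (Case~3) or $\theta^*=-\infty$ (Case~8). The explicit inequalities occurring in these cases---$J(\theta_3)\ge J(\theta_4)$, $J(\theta_3)\ge J(-\infty)$, $J(+\infty)\ge J(\theta_4)$, and their mirror images---are precisely the conditions that the named sub-problem attains the larger value; when they hold with equality the reported strategy remains a valid, though non-unique, maximizer. The remaining situation is that both suprema equal $0$, in which case the global argmax is the union of the two zero-level sets: this yields $\theta^*=0$ when both sets reduce to the origin (Case~1), and $[0,\infty)$, $(-\infty,0]$, or all of $\R$ as the flat directions of [Th.4;2] and/or [Th.5;2] combine (Cases~4,~5,~6). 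The delicate point here is the boundary bookkeeping at $\theta=0$: in [Th.5;1,2] the value $0$ is only approached as $\theta\to0^-$ and is not attained at any strictly negative $\theta$, so such a sub-problem contributes only the origin to the global argmax.

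The main obstacle is not any single hard estimate but the exhaustiveness and correct pairing of the case analysis. I must check that the three possible values of (P3) and the three of (P4) jointly exhaust every admissible parameter configuration, that no combination is counted twice, and that the interval-valued (tie) cases are reconciled with the open/closed boundary behaviour at $\theta=0$. Once Theorems~\ref{thm_exp_case1} and~\ref{thm_exp_case2} are in hand, each of the eight cases of Theorem~\ref{thm_exp} is read off by selecting the larger of the two maximal values together with its argmax, so the remaining argument is mechanical but bookkeeping-intensive.
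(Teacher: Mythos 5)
Your proposal follows essentially the same route as the paper: the paper likewise reduces Problem 1 to (P3) and (P4), lists the only possible maximal prospects of each sub-problem ($0$, $J(+\infty)=w_+(1-p)-\zeta\, w_-(p)>0$, or $J(\theta_3)>0$ for (P3); $0$, $J(-\infty)=w_+(p)-\zeta\, w_-(1-p)>0$, or $J(\theta_4)>0$ for (P4)), and declares the theorem an immediate consequence of comparing these values; your write-up just makes that comparison and the tie-handling explicit. One sentence of yours needs correcting, however: you assert that in [Th.5;1,2] the value $0$ is \emph{only approached as $\theta\to 0^-$ and is not attained at any strictly negative $\theta$}. That is true for [Th.5;1] but false for [Th.5;2], where $p^s_u=p^s_d$ and $\zeta=\underline{\zeta}_1$ force $J(\theta)=0$ for every $\theta<0$, so the entire negative axis belongs to the global argmax --- which is precisely why Cases 5 and 6 of the theorem report the intervals $(-\infty,0]$ and $(-\infty,+\infty)$, exactly as you yourself state in the preceding sentence. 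Restricting that boundary remark to [Th.5;1] removes the internal inconsistency; nothing else in the argument is affected.
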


\begin{rem} \label{rem_case2_sensitivity}
Recall the definitions in \eqref{def_theta3} and \eqref{def_theta4}, two candidates for the optimal investment $\theta_3$ and $\theta_4$ decrease in absolute amount when the risk aversion parameter $\eta$ increases.

If $\lambda \ge \bar{\lambda}:=\max\{1 - \frac{1+r}{u}, \, 1 - \frac{d}{1+r}\}$, both $p^b_d, \, p^s_u \le 0$, then by  Theorem \ref{thm_exp}, the optimal investment $\theta^* = 0$.  This result shows the optimal investment largely depends on transaction costs. CPT investors will not trade the risky asset as long as $\lambda$ is above the threshold $\bar{\lambda}$.
However, if there are no transaction costs in the market ($\lambda = 0$), then the non-arbitrage condition $d < 1+r < u$ implies that $\bar{\lambda} > \lambda =0$.
\end{rem}

We close this subsection by discussing the results of the above theorems when $\lambda=0$.
If $\lambda = 0$, the financial market is a frictionless one.
Recall the definitions of $p^b_u$, $p^b_d$, $p^s_u$, and $p^s_d$. We simplify them as
\[  p^b_u = p^s_u = \frac{1+r-d}{u-d} := p_u>0, \;\; \text{and} \;\; p^b_d = p^s_d = \frac{u-(1+r)}{u-d} := p_d>0 ,\]
where we have used the non-arbitrage condition to derive $p_u, \, p_d>0$.

In consequence, we can drop all the cases with $p^b_d \le 0$ or $p^s_u \le 0$ in Theorems \ref{thm_exp_case1}, \ref{thm_exp_case2}, and \ref{thm_exp}.
In addition, it is easier to list the results under three scenarios: $p_u = p_d$; $p_u>p_d$; and $p_u<p_d$, which shall reduce the cases of the theorems above.
For instance, if $p_u<p_d$, then the optimal solution to Problem \ref{main_problem} is given by
\begin{equation*}
  \theta^* = \begin{cases}
  0, & \text{ if } \zeta \ge \max\{\bar{\zeta}_2, \, \underline{\zeta}_2\} \\
  \theta_3, & \text{ if } \underline{\zeta}_2 \le \zeta < \bar{\zeta}_2 \text{ or } \zeta < \min\{\bar{\zeta}_2, \, \underline{\zeta}_2\} \text{ and } J(\theta_3) \ge J(\theta_4)
\\
\theta_4, & \text{ if } \underline{\zeta}_2 > \zeta \ge \bar{\zeta}_2 \text{ or } \zeta < \min\{\bar{\zeta}_2, \, \underline{\zeta}_2\} \text{ and } J(\theta_3) \le J(\theta_4)  \end{cases}.
\end{equation*}
The above presentation of $\theta^*$ is also used in Subsections \ref{subsec_2case_p3} and \ref{subsec_2case_p4}.
Reformulating Theorems \ref{thm_exp_case1}, \ref{thm_exp_case2}, and \ref{thm_exp} using the above presentation method for $\lambda=0$ is straightforward, and is left as an exercise to interested readers.

\subsection{Solution to Sub-Problem (P3)}
\label{subsec_2case_p3}
In this subsection, we provide analysis and proofs for Theorem \ref{thm_exp_case1}.
Since (P3) and (P3') are equivalent, we focus on (P3'): $\sup_{\xi \in \Xi^b} V(\xi)$.

If $p^b_d \le 0$, i.e., $u \le \frac{1+r}{1-\lambda}$ (corresponding to $\Po(A_1)=1$ in Subsection \ref{subsection_1case_p1}), then $V(\xi) \le 0= V((B,B))$.
Hence $\xi^* = (B,B) \in \Xi^b$, and $\theta^*=0$ because of \eqref{replication_strategy1}.

In the rest of this subsection, we study the non-trivial case where $p^b_d \in (0,1)$.
Immediately, we have $\xi_d - B \le 0 \le \xi_u - B$, and
$\forall \, \xi \in \Xi^b$,
\[B - \xi_d = \frac{p^b_u}{p^b_d} (\xi_u - B).\]

By the definition of CPT, we write $V(\xi)$ as
\begin{align*}
V(\xi) &= w_+(1-p)\cdot u_+(\xi_u - B) - w_-(p) \cdot u_-(B -\xi_d) \\
&= w_+(1-p)\cdot u_+(\xi_u - B) -  w_-(p) \cdot u_- \left( \frac{p^b_u}{p^b_d} (\xi_u - B) \right) := L^b(\xi_u).
\end{align*}
Then sub-problem (P3') is equivalent to $ \sup_{\xi_u \ge B} L^b(\xi_u)$.

\begin{itemize}
\item $p^b_u = p^b_d$

In this case, using Assumption \ref{assumption_case2}, we rewrite $L^b(\xi_u)$ as
\[L^b(\xi_u) = \big( w_+(1-p) - \zeta \cdot w_-(p) \big) \cdot u_+(\xi_u - B).\]

Since $u_+(\xi_u - B)$ is an increasing function of $\xi_u$ and $w_+(1-p)>0$,
\[ \text{sign}((L^b)'(\xi_u)) = \; \text{sign} (\bar{\zeta}_1 - \zeta).\]
Recall $\bar{\zeta}_1 = w_+(1-p)/w_-(p)$ defined in \eqref{def_bar_zeta}.

Therefore, we obtain the optimal payoff $\xi_u^*$  by
\begin{equation*}
\xi_u^*=\begin{cases}
B, &\text{ when } \zeta > \bar{\zeta}_1 \\
[B, +\infty), &\text{ when } \zeta =\bar{\zeta}_1 \\
+\infty, &\text{ when } \zeta < \bar{\zeta}_1
\end{cases}.
\end{equation*}

Hence, using \eqref{replication_strategy1} and $\xi_d=2B-\xi_u$, the optimal solution $\theta^*$ to (P3) in $[0, +\infty)$ is given by
\begin{equation*}
\theta^*=\begin{cases}
0, &\text{ when } \zeta > \bar{\zeta}_1 \\
[0,+\infty), &\text{ when } \zeta =\bar{\zeta}_1 \\
+\infty, &\text{ when } \zeta < \bar{\zeta}_1
\end{cases}.
\end{equation*}

\item $p^b_u > p^b_d$

We calculate $(L^b)'(\xi_u)$ as
\begin{equation*}
(L^b)'(\xi_u) =  w_+(1-p) \cdot \eta e^{-\eta (\xi_u-B)} \left(1- \frac{\zeta}{\bar{\zeta}_2} \, e^{-\eta (p^b_u/p^b_d -1) (\xi_u-B)} \right).
\end{equation*}

If $\zeta \le \bar{\zeta}_2$, then $(L^b)'(\xi_u) > 0$ for all $\xi_u > B$.
The prospect $L^b(\xi_u)$ is a
strictly increasing function of $\xi_u$ (and thus $\theta$), hence the optimal investment in $[0, +\infty)$ is $\theta^*=+\infty$.

If $\zeta > \bar{\zeta}_2$, we obtain
\begin{align*}
(L^b)'(\xi^*_u) = 0 & \Leftrightarrow \xi^*_u = B +\frac{\ln \left( \zeta/\bar{\zeta}\right) }{ \eta \left( \frac{p^b_u}{p^b_d} -1 \right)}>B, \\
(L^b)'(\xi_u) \gtrless  0 & \Leftrightarrow \xi_u \gtrless \xi^*_u, \text{i.e., }
\xi^*_u \text{ is minimum}.
\end{align*}

The constraint $\theta \in [0, +\infty)$ is equivalent to $\xi_u \in [B, +\infty)$.
The maximizer of $\sup_{\xi_u} L^b(\xi_u)$ is $B$ or $+\infty$, namely,
\[ \sup_{\xi_u \in [B, +\infty)} L^b(\xi_u) = \max \{L^b(B)=0, \, L^b(+\infty)\},\]
where
\[ L^b(+\infty) := \lim_{\xi_u \to +\infty} L^b(\xi_u) = w_+(1-p) - \zeta \cdot w_-(p).\]
Apparently, $L^b(+\infty) > 0 \Leftrightarrow \zeta < \bar{\zeta}_1$.

In this scenario,
\[ \bar{\zeta}_2 = \frac{p^b_d}{p^b_u} \cdot \bar{\zeta}_1 < \bar{\zeta}_1.\]
Then we obtain the optimal solution $\theta^*$ to (P3) in $[0, +\infty)$
\begin{equation*}
\theta^*= \begin{cases}
0, &\text{ if } \zeta \ge \bar{\zeta}_1 \\
+\infty, & \text{ if } \zeta < \bar{\zeta}_1
\end{cases}.
\end{equation*}
%Here, $\zeta >1$ comes the \emph{loss aversion} condition.

\item $p^b_u < p^b_d$

Due to the analysis above, we easily obtain that $(L^b)'(\xi_u) < 0$ when $\zeta \ge \bar{\zeta}_2$, and thus $\theta^* = 0$ in this scenario.

If $\zeta < \bar{\zeta}_2$, solving $(L^b)'(\xi_u) =0$ gives
\[\xi_u^* = B + \frac{p^b_d}{\eta (p^b_d - p^b_u)} \ln \left( \frac{\ \bar{\zeta} \ }{\zeta}\right)>B,\]
and the corresponding replication strategy is obtained using \eqref{replication_strategy1}
\[ \frac{\xi^*_u - B}{p^b_d(1-\lambda)(u-d)}  =
\frac{\ln \left( \bar{\zeta}/\zeta \right)}{\eta (1-\lambda)(u-d)(p^b_d - p^b_u)} =\theta_3,\]
where $\theta_3$ is defined by \eqref{def_theta3}.

Since $(L^b)'(\xi_u) \gtrless  0  \Leftrightarrow \xi_u \lessgtr \xi^*_u$, $\xi^*_u$ is the unique maximizer to the problem $\sup_{\xi_u \ge B} L^b(\xi_u)$.

Notice that $\theta_3 > 0$ due to $p^b_u < p^b_d $ and  $\zeta < \bar{\zeta}_2$.

Therefore, if $p^b_u < p^b_d$, the optimal solution $\theta^*$ to (P3) in $[0, +\infty)$ is
\begin{equation*}
\theta^*= \begin{cases}
0, &\text{ if } \zeta \ge \bar{\zeta}_2 \\
\theta_3, & \text{ if } \zeta < \bar{\zeta}_2
\end{cases}.
\end{equation*}

\end{itemize}

\subsection{Solution to Sub-Problem (P4)}
\label{subsec_2case_p4}
To solve (P4) and show Theorem \ref{thm_exp_case2}, we study (P4') in this subsection.

If $p^s_u \le 0$, then $V(\xi) \le 0$ for all $\xi \in \Xi^s$, and $\theta^*=0$.

In the remaining part, we consider the non-trivial case where $p^s_u >0$.
$\forall \, \xi \in \Xi^s$, we have
$ \xi_u - B \le 0 \le \xi_d - B$, and
\[B - \xi_u = \frac{p^s_d}{p^s_d} (\xi_d -B).\]
Hence,
\begin{align*}
V(\xi) &= w_+(p) \cdot u_+(\xi_d - B) - w_-(1-p) \cdot  u_-(B-\xi_u) \\
&=w_+(p)\cdot  u_+(\xi_d - B) - w_-(1-p) \cdot u_-\left( \frac{p^s_d}{p^s_u} (\xi_d - B) \right)
:=L^s(\xi_d).
\end{align*}
The first derivative of $L^s(\xi_d)$ is calculated as
\[ (L^s)'(\xi_d) = \eta e^{-\eta (\xi_d - B)} w_+(p) \left[ 1 - \frac{\zeta}{\underline{\zeta}_2}
e^{-\eta \left(\frac{p^s_d}{p^s_u} -1 \right) (\xi_d -B)} \right],\]
where constant $\underline{\zeta}_2$ is defined by \eqref{def_underline_zeta}.

By \eqref{replication_strategy2}, we derive
\[ \theta = - \frac{\xi_d - B}{p^s_u(u-d)}.\]

It is obvious that sub-problem (P4') and $\sup_{\xi_d \ge B} L^s(\xi_d)$ are equivalent.
The analysis is the same as for $\sup_{\xi_u \ge B} L^b(\xi_u)$ in the previous subsection, and we summarize results below.

\begin{itemize}
\item $p^s_u = p^s_d$

In this case, we have
\[ \text{sign} \big( (L^s)'(\xi_d) \big) =  \text{ sign} \left( 1 - \frac{\zeta}{\underline{\zeta}_1} \right)=-\text{sign} \left( J'(\theta)\right) , \]
and then
\begin{equation*} %\label{optimal_investment_p4_case1}
\theta^*=\begin{cases}
0, &\text{ when } \zeta > \underline{\zeta}_1 \\
(-\infty, 0), &\text{ when } \zeta =\underline{\zeta}_1 \\
-\infty, &\text{ when }  \zeta < \underline{\zeta}_1
\end{cases}.
\end{equation*}

\item $p^s_u < p^s_d$

If $\zeta \le \underline{\zeta}_2$, then $ (L^s)'(\xi_d) >0$ for all $\xi_d > B$ and
$J'(\theta) <0$ for all $\theta<0$. The optimal solution is thus $\theta^*= - \infty$.

If $\zeta > \underline{\zeta}_2$, we obtain
\begin{align*}
(L^s)'(\xi^*_d) = 0 & \Leftrightarrow \xi^*_d =  B + \frac{p^s_u}{\eta (p^s_u - p^s_d)} \ln \left( \frac{\ \underline{\zeta}_2 \ }{\ \zeta \ } \right)>B, \\
(L^s)'(\xi_d) \gtrless  0 & \Leftrightarrow \xi_d \gtrless \xi^*_d, \text{i.e., }
\xi^*_d \text{ is minimum}.
\end{align*}

Therefore, the maximum will be achieved at the end or limit point, i.e.,
\[ \sup_{\theta \in (-\infty, \, 0]} J(\theta) = \max\{ J(0)=0,  \, J(-\infty)\},\]
where
\[J(-\infty) := \lim_{\theta \to -\infty} J(\theta) = \lim_{\xi_d \to +\infty} L^s(\xi_d) = w_+(p) - \zeta \cdot w_-(1-p) .\]

Notice that $J(-\infty) > 0 \Leftrightarrow \zeta < \underline{\zeta}_1$ and
\[ \underline{\zeta}_2 = \frac{p^s_u}{p^s_d} \cdot \underline{\zeta}_1 < \underline{\zeta}_1.\]

In conclusion, if $p^s_u < p^s_d$,
the optimal solution $\theta^*$ to (P4) in $(-\infty, 0]$ is given by
\begin{equation*} %\label{optimal_investment_p4_case2}
\theta^* = \begin{cases}
0, &\text{ if } \zeta \ge \underline{\zeta}_1  \\
-\infty, & \text{ if } \zeta < \underline{\zeta}_1
\end{cases}.
\end{equation*}

\item $p^s_u > p^s_d$

If $\zeta \ge \underline{\zeta}_2$, then $ (L^s)'(\xi_d) < 0$ for all $\xi_d > B$ and $J'(\theta)>0$ for all $\theta<0$, so the optimal solution $\theta^*=0$.

If $\zeta < \underline{\zeta}_2$, we obtain
\[ (L^s)'(\xi_d) \gtrless  0  \Leftrightarrow \xi_d \lessgtr \xi^*_d,\]
which implies that $\xi^*_d = \argmax L^s(\xi_d)$.

For $(\xi^*_u, \xi^*_d)$, where $\xi^*_u$ satisfies $p^s_u \cdot (\xi^*_u - B) + p^s_d \cdot (\xi^*_d - B)=0$, the corresponding replication strategy is given by
\[\frac{\xi^*_u - \xi^*_d}{u - d} = - \frac{\xi^*_d - B}{p^s_u (u -d)}
= - \frac{\ln \left( \underline{\zeta}_2 / \zeta \right) }{\eta (u-d) (p_u^s - p_d^s)} =\theta_4, \]
where $\theta_4$ is defined by \eqref{def_theta4}.

We then obtain the optimal solution $\theta^*$ to (P4) in $(-\infty, 0]$ as
\begin{equation*} %\label{optimal_investment_p4_case3}
\theta^*= \begin{cases}
0, &\text{ if } \zeta \ge \underline{\zeta}_2 \\
\theta_4, & \text{ if } \zeta < \underline{\zeta}_2
\end{cases}.
\end{equation*}
\end{itemize}

\section{Economic Analysis}
\label{sec_economic_analysis}
In this section, we conduct an economic analysis to study how the optimal investment strategy is affected by transaction costs and risk aversion. The calculations in Section \ref{sec_case2} are straightforward as long as the binomial model \eqref{binomial_model} has been estimated. However, under Tversky and Kahneman's weighting functions \eqref{weighting_TK92} or Prelect's weighting functions \eqref{weighting_P98}, the numerical calculations for $K_1(Z_1)$ and $K_2(Z_2)$ (two integrals) in Section \ref{section_case1} are very complicated even when the risky return $1+R$ is normally distributed or lognomarlly distributed.
In what follows, we obtain numerical results based on the model in Section \ref{section_case1} and conclusions from Theorem \ref{thm_1case}.

\subsection{Data and Model Parameters}
\label{subsec_data}
We consider optimal investment problems in a single-period discrete model, so we select a relatively short time window.
In the economic analysis thereafter, we select the time window to be 1 week, $T=1$ week.

To estimate the risk-free interest rate $r$, we use 3-month EONIA  (Euro OverNight Index Average) Swap Index bid close quotes
between January 2, 2012 and June 30, 2015.
There are 891 daily observations during the selected time period.\footnote{Data source: Thomson Reuters Eikon. Access from the Chair of Mathematical Finance at the Technical University of Munich is greatly appreciated.}
To have more consistent data, we convert the daily frequency into weekly.
The descriptive statistics for the weekly quotes are summarized in the table below.
\vspace{-1em}
\begin{table}[!h] \label{table_summary}
\renewcommand{\arraystretch}{1.5}
\begin{tabular}{|c|c|c|c|c|} \hline
Obs. & Mean & Median & Std. & Skewness \\ \hline
183 & 0.0779\% & 0.0696\% & 0.1240 & 0.4317 \\ \hline
\end{tabular}
\caption{Summary Statistics of Annualized Risk-free Return}
\end{table}

\vspace{-2em}
Due to the right skewness, we choose the median as the estimate for the risk-free return.
Then the weekly risk-free return $r$ is obtained by
\[ r = (1 + 0.0696\%)^{1/52} - 1 = 1.3380 \times 10^{-5}.\]

In order to estimate the distribution of the risky return $R$, we choose the weekly close quotes of FTSE (Financial Times Stock Exchange) 100 Index from January 2, 2012 to July 6, 2015.\footnote{Data source: Yahoo Finance https://uk.finance.yahoo.com/q/hp?s=\%5EFTSE.}
We calculate the log return of the FTSE 100 index and obtain
\[ \mu=3.2932 \times 10^{-4}, \text{ and } \sigma=7.4383 \times 10^{-3}.\]
\begin{figure}
  \centering
  \includegraphics[width=\textwidth]{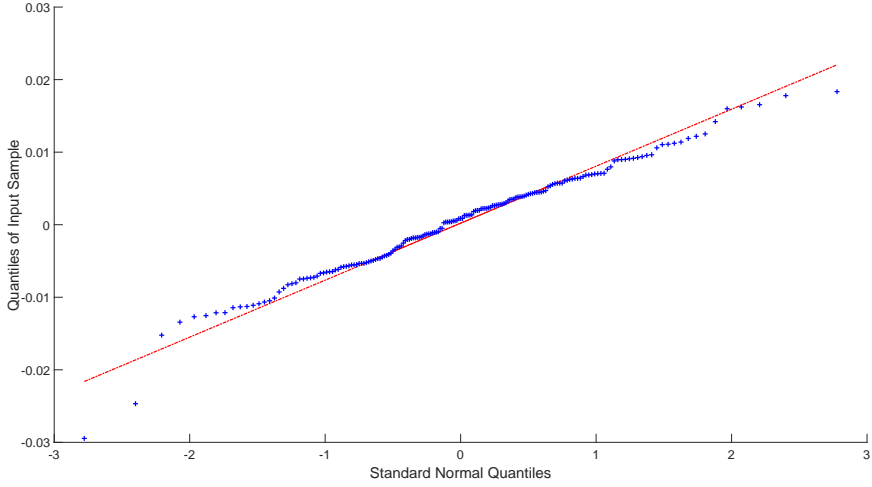}\\
  \caption{QQ Plot of $\ln(1+R)$ versus Standard Normal}
  \label{fig_qq_weekly}
\end{figure}
The QQ plot of $\ln(1+R)$ versus standard normal in Figure \ref{fig_qq_weekly} suggests that $\ln(1+R)$ is approximately normal. From now on, we assume
\[\ln(1+R) \sim N\left(\mu, \; \sigma^2 \right).\]

For the numerical calculations in this section, we select Tversky and Kahneman's weighting function (given by \eqref{weighting_TK92}) with parameters
\[\gamma = 0.61 \text{ and } \delta = 0.69.\]

We consider a piece-wise power utility function given by \eqref{utility_TK92}. The risk attitudes of an CPT investor depend on $\alpha$ and $\beta$. We separate the discussions into two cases: $\alpha=\beta$ and $\alpha<\beta$.

\subsection{The Case of $\alpha = \beta$}
If $\alpha = \beta$, the optimal investment strategy is given by the corresponding cases in Theorem \ref{thm_1case}.
In the analysis, we select
$\alpha = \beta = 0.88 $, as estimated in \cite{TK92}.

Since $\ln(1+R)$ is normally distributed, we have $ 0 < \Po(A_1), \, \Po(A_2) <1$.
Then according to Theorem \ref{thm_1case}, we need to calculate $K_1(Z_1)$ and $K_2(Z_2)$ in order to obtain the optimal investment strategy $\theta^*$.
The graphs of $K_1(Z_1)$ and $K_2(Z_2)$ as a function of transaction cost parameter $\lambda$  are provided in Figure \ref{fig_k1_weekly} and Figure \ref{fig_k2_weekly}, respectively.

\begin{figure}[!htb]
\vspace{-1em}
  \centering
  \includegraphics[width=\textwidth]{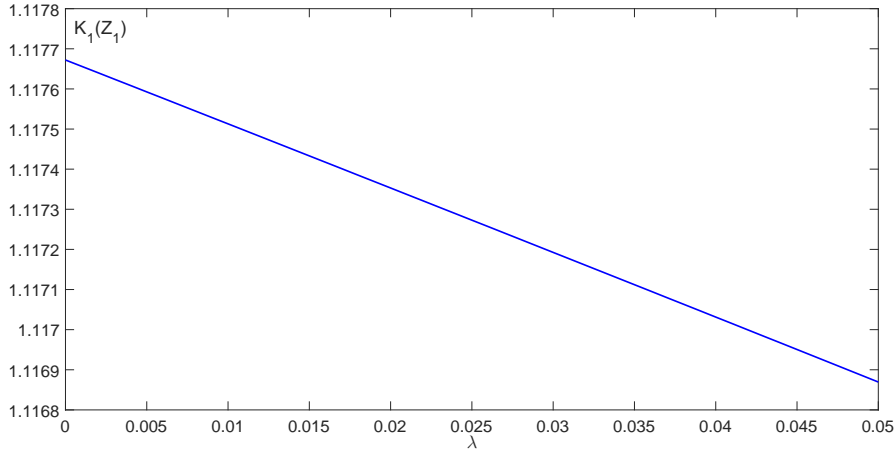}\\
  \caption{$K_1(Z_1)$ when $0<\lambda <5\%$}
  \label{fig_k1_weekly}
  \vspace{-1em}
\end{figure}

\begin{figure}[!htb]
%\vspace{-1em}
  \centering
  \includegraphics[width=\textwidth]{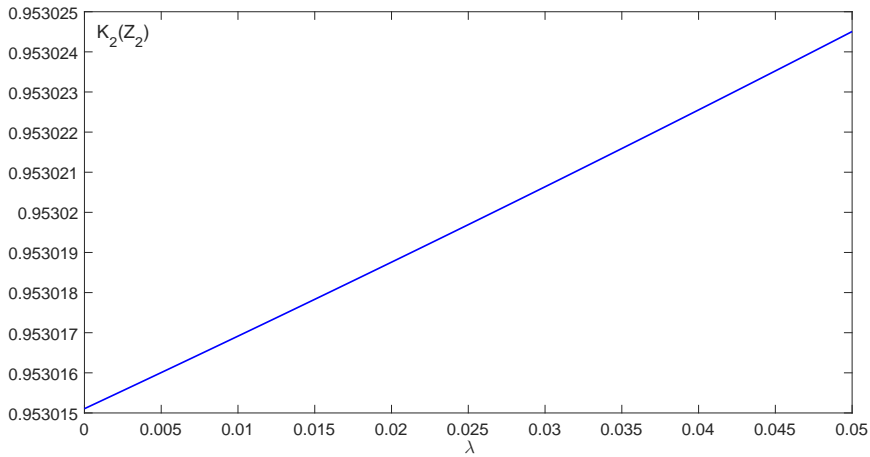}\\
  \caption{$K_2(Z_2)$ when $0<\lambda <5\%$}
  \label{fig_k2_weekly}
  \vspace{-1em}
\end{figure}

%\vspace{-2em}
If $\lambda$ increases, i.e., $\lambda \uparrow$ (recall $Z_1 = (1-\lambda)(1+R) - (1+r)$ and $Z_2 = (1-\lambda) (R-r)$),
both $Z_1$ and $Z_2$ will decrease ($Z_1 \downarrow$ and $Z_2 \downarrow$). Then immediately, we obtain $F_{Z_1} \uparrow$, $S_{Z_1} \downarrow$, $F_{Z_2} \uparrow$, and $S_{Z_2} \downarrow$, which, by definitions \eqref{def_g1_l1} and \eqref{def_g2_l2}, imply that $g_1(Z_1) \downarrow$, $l_1(Z_1) \uparrow$, $g_2(Z_2) \uparrow$,
and $l_2(Z_2) \downarrow$. All these results together suggest that $K_1(Z_1) \downarrow$ and $K_2(Z_2) \uparrow$, which are confirmed by Figure \ref{fig_k1_weekly} and Figure \ref{fig_k2_weekly}.

From Figures \ref{fig_k1_weekly} and \ref{fig_k2_weekly}, we observe that $1 < K_1(Z_1) < 2.25$ and $K_2(Z_2) < 1 $ for all $\lambda \in (0, 5\%)$. In \cite{TK92}, $k$ is estimated to be 2.25, then Case (1d) in Theorem \ref{thm_1case} holds, and hence we obtain the optimal investment $\theta^*=0$.

In this numerical example, the time window is chosen as one week and we have a ``bear" market after the financial crisis of 2007-2008 during the selected period; thus, the difference between investment returns $R(\omega)-r$ is small for most states $\omega \in \Omega$.  With a longer time window and/or a better market performance, $R-r$ will increase, resulting in the increase of $Z_1$ and $Z_2$. Hence, we infer $K_1(Z_1)$ will be greater than 2.25 at certain model/market conditions when transaction costs are small.
On the other hand, despite $K_2(Z_2)$ is an increasing function of $\lambda$ (then a decreasing function of $R-r$), $K_2(Z_2)$ is less sensitive to the change of $\lambda$ or $R-r$ comparing to $K_1(Z_1)$.
Therefore, in a ``bull" market, we may have the case
$\max\{K_1(Z_1), \, K_2(Z_2)\} = K_1(Z_1)  > k$ for small $\lambda$, which corresponds to Case (8b) in Theorem \ref{thm_1case}, and then $\theta^* = +\infty$. The economic interpretation for this scenario is that CPT investors should buy the risky asset as much as they can in a very good economy.
For example, if we assume the price process of the risky asset is given by a geometric Brownian Motion with drift $15\%$ and volatility $20\%$
and the risk-free interest rate is $r=5\%$. In addition, we select $\lambda = 1\%$ and $T = 1 $ year.
We find $K_1(Z_1) = 2.7144 > k = 2.25$ and $K_2(Z_2) = 0.3957$, implying $\theta^* = +\infty$ in this market model.
Clearly, if $\lambda$ is large enough (e.g., $\lambda \to 1$), the optimal investment will be $0$.
In scenarios when $K_1(Z_1)  > k$ for small $\lambda$, the impact of transaction costs on the optimal investment $\theta^*$ is dramatic, because $\theta^*=+\infty$ if $\lambda$ is less than a critical threshold, but $\theta^*=0$ if $\lambda$ is greater than the threshold.

\subsection{The Case of $\alpha < \beta$}
We next study the case of $\alpha < \beta$ when $\ln(1+R)$ is normally distributed.
We investigate the impact of the utility parameters, $\alpha$ and $\beta$, on the optimal investment strategy.
In this particular study, we assume the investment constraint is not binding, and hence,
\[\theta^*=\argmax_{\{\theta_1, \, \theta_2\}} J(\theta),\]
where $\theta_1$ and $\theta_2$ are given by \eqref{def_theta1_theta2}.
Remark \ref{rem_case1} of Theorem \ref{thm_1case} provides conditions when $\theta^*=\theta_1 \text{ or } \theta_2$, see also \cite[Appendix]{HZ11}.

  \begin{figure}[!htb]
  \vspace{-2em}
  \centering
  \includegraphics[width=\textwidth]{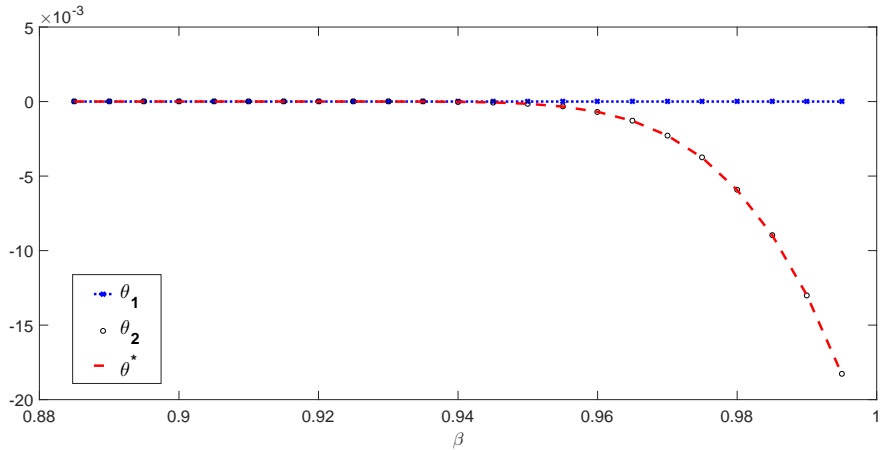}\\
  \caption{$\alpha=0.88$, $0.88 < \beta < 1$, and $\lambda=1\%$}
  \label{fig_theta_beta}
   \vspace{-1em}
\end{figure}

First, we fix $\alpha = 0.88$, and calculate $\theta_1$ and $\theta_2$ as functions of $\beta$, where $ 0.88 < \beta < 1$.
The transaction cost parameter $\lambda$ is chosen at $1\%$.
In Figure \ref{fig_theta_beta}, the line marked in circle coincides with the dashed line, i.e., $\theta^*=\theta_2$, implying that the optimal strategy is to short the risky asset.
Furthermore, we observe that $\theta_1$ is an increasing function of $\beta$,\footnote{The increasing property of $\theta_1$ with respect to $\beta$ is not that noticeable in Figure \ref{fig_theta_beta}, but is clearly supported by numerical values.} but $\theta_2$ is a decreasing function of $\beta$.
Therefore, the optimal investment (in absolute amount) increases when $\beta$ increases.
%(i.e., CPT investors are risk seeking for losses).

Next, we fix $\beta = 0.88$ and $\lambda = 1\%$, and consider $\alpha \in (0.6,0.88)$.
By following similar numerical calculations as in the previous study, we draw the graphs in Figure \ref{fig_theta_alpha}. Comparing with the findings from Figure \ref{fig_theta_beta}, we obtain exactly opposite results regarding monotonicity.
Namely, $\theta_1$ is a decreasing function of $\alpha$ and $\theta_2$ is an increasing  function of $\alpha$.
As before, we still have $\theta^*=\theta_2$.
Therefore, this study shows that the optimal investment (in absolute amount) decreases as $\alpha$ increases.
% (meaning CPT investors become less risk averse towards gains).

  \begin{figure}[!htb]
  \vspace{-2em}
  \centering
  \includegraphics[width=\textwidth]{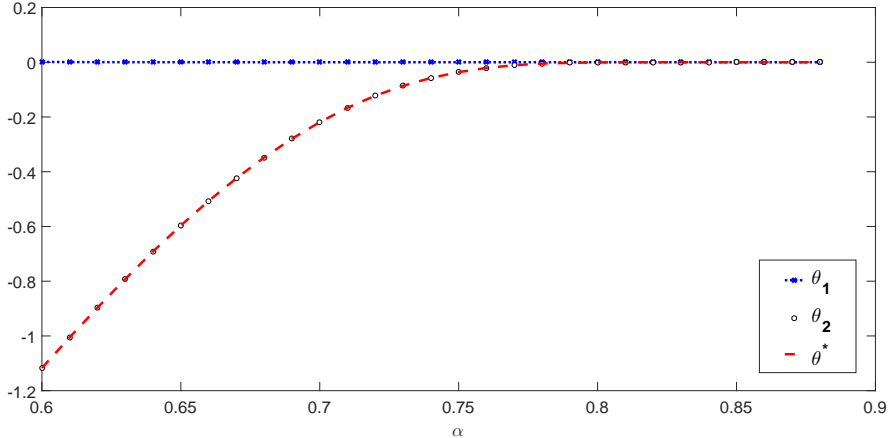}\\
  \caption{$0.6< \alpha<0.88$, $\beta = 0.88$, and $\lambda=1\%$}
  \label{fig_theta_alpha}
  \vspace{-1em}
\end{figure}

Lastly, we fix $\alpha = 0.8$ and $\beta = 0.88$, and consider $\lambda \in (0, 0.15\%]$ (between 0 and 15 bps).
The results in this case are drawn in Figure \ref{fig_theta_lambda}.
Notice that we have $\theta^* = \theta_1$ only when transaction costs are small  and $\theta^* = \theta_2$ otherwise.
This result shows that transaction costs are crucial to the optimal investment strategy.
Once the transaction cost parameter $\lambda$ increases beyond a certain threshold ($7 \sim 8$ bps in the numerical example), the optimal investment strategy will shift from ``long position" to ``short position" in the risky asset.
  \begin{figure}[!htb]
  %\vspace{-1em}
  \centering
  \includegraphics[width=\textwidth]{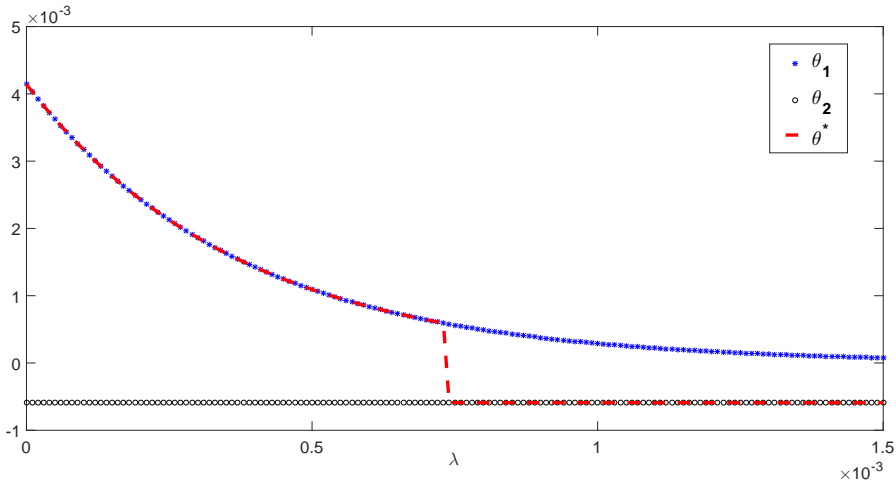}\\
  \caption{$\alpha=0.8$, $\beta = 0.88$, and $0< \lambda \le 0.15\%$}
  \label{fig_theta_lambda}
\end{figure}

\section{Conclusions}
\label{sec_conclusion}
Prospect theory was proposed in \cite{KT79}, and further developed into cumulative prospect theory (CPT) in \cite{TK92}. According to CPT, people evaluate uncertain outcomes by comparing them to a reference point, which separates all the outcomes into gains and losses based on the comparison. In addition, people's risk attitudes towards gains and losses are not universally risk averse. Instead, they exhibit fourfold patterns (see \cite{TK92}):
\begin{quote}
\emph{risk aversion for gains and risk seeking for losses of high probability;
risk seeking for gains and risk aversion for losses of low probability.}
\end{quote}
The experimental studies challenge some fundamental axioms of expected utility theory (EUT), which, by far, is still the most popular criterion in economics and finance when it comes to decision making with uncertainty.

In this paper, we consider a CPT investor in a single-period discrete-time financial model with transaction costs.
The investor seeks the optimal investment strategy that maximizes the prospect value of his/her final wealth.

The main objective of our work is to obtain explicit solutions to the optimal investment problem with transaction costs under CPT.
We obtain the optimal investment in explicit form to this problem in two examples when the utility function and the reference point are given in specific forms.
We conduct an economic analysis to study the impact of transaction costs and risk aversion on the optimal investment.
The results confirm that transaction costs play an important role in the optimal investment.
There exist thresholds for the transaction cost parameter $\lambda$.
In some cases, the optimal investment is 0 when $\lambda$ is above a threshold.
In other cases, there exists a threshold for $\lambda$ which separates the optimal investment into ``buy" strategies and ``sell" strategies.
When the investor's preference is characterized by a piece-wise power utility $u(x)=x^\alpha \cdot \bm{1}_{x \ge 0} - k (-x)^\beta \cdot \bm{1}_{x <0}$, we also observe that the optimal investment
decreases in amount as $\alpha$ increases or $\beta$ decreases.
However, such finding may not hold in general, as pointed out in \cite[Section 5]{BG10}.

\begin{acknowledgements}
The first author B.Z. acknowledges the financial support from the Technical University of Munich (TUM) through the TUM foundation fellowship.
This work was done when B.Z. was a postdoctoral fellow at the Chair of Mathematical Finance of TUM.
We are thankful to two anonymous referees for valuable comments and suggestions.
\end{acknowledgements}

% BibTeX users please use one of
%\bibliographystyle{spbasic}      % basic style, author-year citations
%\bibliographystyle{spmpsci}      % mathematics and physical sciences
%\bibliographystyle{spphys}       % APS-like style for physics
%\bibliography{}   % name your BibTeX data base

% Non-BibTeX users please use

\end{document}